\newtheorem{prop}{Proposition}
\newtheorem{rmk}{Remark}
\begin{document}

\title{A Quantum Kinetic Monte Carlo Method For Lindblad Equation}

\author{Limin Xu}
\ead{xlm20@mails.tsinghua.edu.cn}



\address{Department of Mathematical Sciences, Tsinghua University, Beijing 100084, China}


\begin{abstract}
    In this paper, we generalize the Quantum Kinetic Monte Carlo (QKMC) method of the Schrodinger equation, which was first proposed by [Z.~Cai and J.~Lu. \newblock {\em SIAM J. Sci. Comput.}, 40(3): B706--B722, 2018] to the Lindblad equation. This algorithm makes full use of the tensor product structure of the matrices in the Lindblad equation, thus significantly reducing the storage cost, and can calculate a more extensive system than the existing methods. We demonstrate the method in the framework of the dissipative Ising model, and numerical experiments verify the method's validity and error analysis.
\end{abstract}

\begin{keyword}
    Lindblad equation \sep quantum kinetic Monte Carlo method \sep open quantum system
\end{keyword}

\maketitle

\section{Introduction}
For a closed quantum system, the system's state can be described by a wave vector $|\psi\rangle$ in a Hilbert space which satisfies the Schr$\mathrm{\ddot o}$dinger equation. Equivalently, it also can be described by a matrix density $\rho=|\psi\rangle\langle\psi|$, which satisfies the following Von-Neumann equation \cite{breuer_theory_2002,rivas_open_2012}
\begin{equation}
    \partial_t\rho(t)=-\mathrm{i}[H,\rho(t)].
\end{equation}
However, there is no absolute closed quantum system in reality. Every quantum system (which is referred to as "system" in this context) under study interacts with its environment (which is referred to as "environment" in this context), and the system and environment as a whole make up a closed quantum system (which is referred to as "total system" in this context). Such a quantum system is called an open quantum system, which plays a vital role in quantum optics, quantum decoherence, and chemical physics\cite{breuer_colloquium_2016,de_vega_dynamics_2017,johansson_qutip_2012,kramer_quantumopticsjl_2018,noauthor_exploring_nodate,rotter_review_2015,weimer_simulation_2021}. 

For an open quantum system, the key quantity is the density matrix of the system $\rho_{\mathrm{S}}$, which is the partial trace of $\rho_{\mathrm{T}}$:
\begin{equation}
    \rho_{\mathrm{S}}=\operatorname{Tr}_{\mathrm{E}}(\rho_{\mathrm{T}}),
\end{equation}
where $\rho_{\mathrm{T}}$ is the density matrix of the total system, which is a closed quantum system, thus satisfying the Von-Neumann equation:
\begin{equation}
    \partial_t\rho_{\mathrm{T}}(t)=-\mathrm{i}[H_{\mathrm{T}},\rho_{\mathrm{T}}(t)],
\end{equation}
where the $H_{\mathrm{T}}$ is the Hamiltonian of the total system, which can be expressed as
\begin{equation}
    H_{\mathrm{T}}=H_{\mathrm{S}}+H_{\mathrm{E}}+\lambda H_{\mathrm{I}},
\end{equation}
in which $H_{\mathrm{S}}$ is the the Hamiltonian of system, $H_{\mathrm{E}}$ is the the Hamiltonian of environment and $H_{\mathrm{I}}$ is the interaction between system and environment, $\lambda$ is the strength of the interaction. Under the assumption of weak coupling, the $\rho_{\mathrm{S}}$ satisfies the Lindblad equation \cite{manzano_short_2020}:
\begin{equation}
\partial_t \rho_S=\mathcal{L}\left(\rho_S\right):=-\mathrm{i}\left[H_S, \rho_S\right]+\sum_{i=1}^{N}\left(c_i \rho_S c_i^{\dagger}-\frac{1}{2}\left\{c_i^{\dagger} c_i, \rho_S\right\}\right),
\end{equation}
where the $c_i$ is called the jump operator or dissipative operator, which models the interaction with the environment. It is well known that the generator of a completely positive (CP) trace-preserving semigroup
dynamics must have the Lindbladian form.

The Lindblad equation is an ordinary differential equation that can be solved by deterministic solvers such as the Runge-Kuta method, exponential method, and so on. For the error analysis and absolute stability analysis of deterministic solvers, we refer to \cite{cao_structure-preserving_2021}. However, the size of the matrices in the equation grows exponentially with the system, which leads to the disaster of dimensionality in deterministic methods\cite{weimer_simulation_2021,minganti_arnoldi-lindblad_2022}. In practice, when the number of sites is more than $20$, the storage costs in deterministic solvers are beyond the capacity of personal computers. In order to reduce storage costs, there are two main methods were developed\cite{cao_stochastic_2018}. The first method was the Monte Carlo method, which is based on stochastic Schr$\mathrm{\ddot o}$dinger equation (SSE)\cite{cao_stochastic_2018,weimer_simulation_2021}. This method unravels the density matrix to the ensemble of some wave vectors and calculates the density matrix by solving SSE. The second method is the dynamical low-rank method which approximates the density matrix by a lower-rank matrix and calculates the density matrix by solving the equations satisfied by two smaller matrices\cite{bris_low_2013,le_bris_adaptive_2015,le_bris_low-rank_2013}. Both methods reduce the storage cost of the density matrix but still need to store the matrices $H_{\mathrm{S}}$ and $c_i$. Combining the two methods is also considered in \cite{cao_stochastic_2018,le_bris_adaptive_2015}. In a word, although the Monte Carlo method based on SSE and the dynamical low-rank method can reduce some storage costs, existing methods require storing the matrices $H_{\mathrm{S}}$ and $c_i$ and cannot calculate a more extensive system.

In this article, we propose a new Monte Carlo method called the quantum kinetic Monte Carlo method (QKMC) to reduce the computational costs of the Lindblad equation. This method generalizes the quantum kinetic Monte Carlo (QKMC) method, which was first proposed in Schr$\mathrm{\ddot o}$dinger equation by \cite{cai_quantum_2018} to Lindblad equation and makes full use of the tensor product structure of the matrices in the Lindblad equation, thus significantly reducing the storage cost, and can calculate a more extensive system than the existing methods.

The rest of the paper is organized as follows. In \cref{sec:QKMC}, we shall first review the QKMC method in Schr$\mathrm{\ddot o}$dinger equation and then formulate the QKMC method in the Lindblad equation. Then in \cref{sec:model}, we give the dissipative Ising model and the choice of stochastic part in the QKMC method. The error analysis will be provided in \cref{sec:error}. Then numerical results will be presented in \cref{sec:numerical results} to demonstrate the performance of the QKMC method. In \cref{sec:conclusion}, we will give a brief summary and some future work.

\section{Quantum kinetic Monte Carlo Algorithm}\label{sec:QKMC}
\subsection{Quantum kinetic Monte Carlo algorithm for Many-Body Schr$\mathrm{\ddot o}$dinger Equation}

Before We turn to the quantum kinetic Monte Carlo method for Lindblad Equations, let us review the framework of the quantum kinetic Monte Carlo algorithm for many-body Schr$\mathrm{\ddot o}$dinger equation. For more details, we refer to \cite{cai_quantum_2018}.

Given a Hamiltonian $H$ of a closed quantum system, the many-body wave function $|\psi(t)\rangle$ in the Hilbert space $\mathcal{H}$ is governed by the Schr$\mathrm{\ddot o}$dinger equation 
\begin{equation}
    \frac{\mathrm{d}}{\mathrm{d}t}|\psi(t)\rangle=-\mathrm{i}H|\psi(t)\rangle.
\end{equation}
The quantum kinetic Monte Carlo(QKMC) algorithm starts with a decomposition of the Hamiltonian
\begin{equation}
    H=H_{\mathrm{e}}+H_{\mathrm{h}}
\end{equation}
with a class of states $\mathcal{H}_1\subset\mathcal{H}$. This decomposition and $\mathcal{H}_1$ need to satisfy the following conditions:
\begin{itemize}
    \item[(1)] Any vector $|\psi\rangle\in\mathcal{H}_1$ is easy to represent so that we don't have to spend a lot of space to store it, and $\mathcal{H}_1$ might not be a vector space.
    \item[(2)] For any $|\psi\rangle\in\mathcal{H}_1$, the action of $H_{\mathrm{e}}$ remains in $\mathcal{H}_1$,
    \begin{equation}
        \mathrm{e}^{-\mathrm{i}t H_{\mathrm{e}}}|\psi\rangle\in\mathcal{H}_1\qquad \forall t,
    \end{equation}
    and is easy to obtain.
    \item[(3)] There exists a stochastic operator $S(\omega)$ with $\omega$ corresponding to some random space $\Omega$, such that for any $|\psi\rangle\in\mathcal{H}_1$,
    \begin{equation}
        \mathbb{E}_\omega S(\omega)|\psi\rangle=H_{\mathrm {h }}|\psi\rangle \quad \text { and } \quad S(\omega)|\psi\rangle \in \mathcal{H}_1 \quad \forall \omega .
    \end{equation}
\end{itemize}

After such a decomposition, the Schrodinger equation can be written as the following integral form by Duhamel's principle:
\begin{equation}\label{eq:integral formula of wave function}
\begin{aligned}
|\psi(t)\rangle= & \sum_{M=0}^{+\infty} \int_0^t \int_{\Omega} \int_0^{t_M} \int_{\Omega} \cdots \int_0^{t_2} \int_{\Omega}(-\mathrm{i})^M \\
& \times U\left(t, t_M\right) S\left(t_M, \omega_M\right) U\left(t_M, t_{M-1}\right) S\left(t_{M-1}, \omega_{M-1}\right) \cdots \\
& \times U\left(t_2, t_1\right) S\left(t_1, \omega_1\right) U\left(t_1, 0\right)|\psi(0)\rangle \mathrm{d} \mu_{\omega_1} \mathrm{~d} t_1 \cdots \mathrm{d} \mu_{\omega_{M-1}} \mathrm{~d} t_{M-1} \mathrm{~d} \mu_{\omega_M} \mathrm{~d} t_M,
\end{aligned}
\end{equation}
where we used $\mu_{\omega}$ to denote the probability measure of $\omega$ and write
\begin{equation}
\mathbb{E}_\omega f(\omega)=\int_{\Omega} f(\omega) \mathrm{d} \mu_\omega ,
\end{equation}
and $U(t,s)$ is the unitary evolution operator which corresponds to the dynamics governed by $H_{\mathrm{e}}$:
\begin{equation}
    \frac{\mathrm{d}}{\mathrm{d}t}|\Psi(t)\rangle=-\mathrm{i}H_{\mathrm{e}}|\Psi(t)\rangle.
\end{equation}

The expansion \cref{eq:integral formula of wave function} allows us to link the integral to a marked point process with mark space $\Omega$  and then use the Monte Carlo method to evaluate $|\psi(t)\rangle$. We denote $\Xi=\left(\left(t_m\right),\left(\omega_m\right)\right)_{m \geqslant 1}$ one realization of the marked point process. The marked point process is generated by an intensity function $\lambda(t,\omega)$, i.e.,
\begin{equation}
\mathbb{P}(\text { A mark in } \Sigma \text { appears in }[t, t+h))=\int_{\Sigma} \lambda(t, \omega) h \mathrm{~d} \mu_\omega+o(h) \quad \forall t>0, \quad \forall \Sigma \subset \Omega \text {. }
\end{equation}
With this marked point process, we have the following identity for any function $F(\Xi)$ \cite{jacobsen_point_2006}:
\begin{equation}
\begin{aligned}
\mathbb{E}_{\Xi} F(\Xi)= & \sum_{M=0}^{+\infty} \int_0^t \int_{\Omega} \int_0^{t_M} \int_{\Omega} \cdots \int_0^{t_2} \int_{\Omega} \exp \left(-\int_0^t \int_{\Omega} \lambda(s, \omega) \mathrm{d} \mu_\omega \mathrm{d} s\right) \\
& \times\left(\prod_{m=1}^M \lambda\left(t_m, \omega_m\right)\right) F(\Xi) \mathrm{d} \mu_{\omega_1} \mathrm{~d} t_1 \cdots \mathrm{d} \mu_{\omega_{M-1}} \mathrm{~d} t_{M-1} \mathrm{~d} \mu_{\omega_M} \mathrm{~d} t_M .
\end{aligned}
\end{equation}
Therefore, the wave function can be written as
\begin{equation}\label{eq:expectation formula}
|\psi(t)\rangle=\mathbb{E}_{\Xi}\left|\phi_{\Xi}(t)\right\rangle,
\end{equation}
where
\begin{equation}
\begin{aligned}
\left|\phi_{\Xi}(t)\right\rangle= & \exp \left(\int_0^t \int_{\Omega} \lambda(s, \omega) \mathrm{d} \mu_\omega \mathrm{d} s\right) U\left(t, t_M\right) \widetilde{S}\left(t_M, \omega_M\right) \\
& \times U\left(t_M, t_{M-1}\right) \widetilde{S}\left(t_{M-1}, \omega_{M-1}\right) \cdots U\left(t_2, t_1\right) \widetilde{S}\left(t_1, \omega_1\right) U\left(t_1, 0\right)|\psi(0)\rangle,
\end{aligned}
\end{equation}
where $M$ is the number of marks in $\Xi$ before time $t$, and we have used the short-hand
\begin{equation}
\widetilde{S}\left(t_m, \omega_m\right)=-\mathrm{i} S\left(t_m, \omega_m\right) / \lambda\left(t_m, \omega_m\right), \quad m=1, \ldots, M.
\end{equation}

With the \cref{eq:expectation formula}, we can design an Monte Carlo method to evaluate $|\psi(t)\rangle$. The algorithm is to generate a sequence of realizations of process $\Xi$ and evaluate $|\phi_{\Xi}(t)\rangle$ for each realization; an estimate of $|\psi\rangle$ is then given by average. For easier implementation, we define
\begin{equation}
\eta(t)=\int_0^t \int_{\Omega} \lambda(s, \omega) \mathrm{d} \mu_\omega \mathrm{d} s, \quad\left|\tilde{\phi}_{\Xi}(t)\right\rangle=\mathrm{e}^{-\eta(t)}\left|\phi_{\Xi}(t)\right\rangle ,
\end{equation}
then $\eta$ and $|\tilde{\phi}_{\Xi}(t)$ satisfies
\begin{equation}\label{eq:eta}
\frac{\mathrm{d} \eta}{\mathrm{d} t}=\int_{\Omega} \lambda(t, \omega) \mathrm{d} \mu_\omega,
\end{equation}
and
\begin{equation}\label{eq:Phi_tilde}
    \frac{\mathrm{d}}{\mathrm{d} t}\left|\tilde{\phi}_{\Xi}(t)\right\rangle=-\mathrm{i} H_{\text {e}}(t)\left|\tilde{\phi}_{\Xi}(t)\right\rangle, \quad t \in\left(t_{m-1}, t_m\right), \quad m>0.
\end{equation}
Then we have
\begin{equation}
\mathbb{P}\left(\text { no mark exists in }\left(s_1, s_2\right)\right)=\exp \left(-\left[\eta\left(s_2\right)-\eta\left(s_1\right)\right]\right) .
\end{equation}

Finally, we give the QKMC algorithm for the many-body Schrodinger equation based on the above notation:
\begin{itemize}
    \item[(1)] Choose time step $\Delta t$, set $t\gets 0, \eta(t)\gets 0, |\tilde{\phi}_{\Xi}(t)\rangle\gets |\psi(0)\rangle$. 
    \item[(2)] Solve $\left|\tilde{\phi}_{\Xi}(t+\Delta t)\right\rangle$ according to \cref{eq:Phi_tilde}.
    \item[(3)] If $t+\Delta t=T$, stop. Otherwise, generate a random number $Y$ obeying the uniform distribution in $[0,1]$ and solve $\eta(t+\Delta t)$ according to \cref{eq:eta}, then 
    \begin{itemize}
        \item[(3.1)] If $\exp{(\eta(t)-\eta(t+\Delta t))}\le 1-Y$, set $t\gets t+\Delta t, \eta(t)\gets \eta(t+\Delta t), |\tilde{\phi}_{\Xi}(t)\rangle\gets |\tilde{\phi}_{\Xi}(t+\Delta t)\rangle$ and return to step $(2)$.
        \item[(3.2)] Otherwise, generate a mark $\omega\in\Omega$ according to the probability measure $\mu_{\omega}$, set $|\widetilde{\phi}_{\Xi}(t+\Delta t)\rangle \leftarrow \widetilde{S}(t+\Delta t, \omega)|\widetilde{\phi}_{\Xi}(t+\Delta t)\rangle, t\gets t+\Delta t, \eta(t)\gets \eta(t+\Delta t), |\tilde{\phi}_{\Xi}(t)\rangle\gets |\tilde{\phi}_{\Xi}(t+\Delta t)\rangle$ and return to step $(2)$.
    \end{itemize}
\end{itemize}

\subsection{Quantum kinetic Monte Carlo algorithm for Lindblad Equation}
We will generalize the QKMC algorithm to the Lindblad equation in this subsection. We denote the set of all density matrices as $\mathcal{M}$.
Similar to the Schr$\mathrm{\ddot o}$dinger equation, we start with a decomposition of the Liouville operator
\begin{equation}
    \mathcal{L}=\mathcal{L}_{\mathrm{e}}+\mathcal{L}_{\mathrm{h}}
\end{equation}
with a class of density matrix $\mathcal{M}_1\subset\mathcal{M}$. This decomposition and $\mathcal{M}_1$ need to satisfy the following conditions:
\begin{itemize}
    \item[(1)] Any density matrix ${\rho}_S\in\mathcal{M}_1$ is easy to represent so that we don't have to spend a lot of space to store it, and $\mathcal{M}_1$ might not be a vector space.
    \item[(2)] For any ${\rho}_S\in\mathcal{M}_1$, the action of $\mathcal{L}_{\mathrm{e}}$ remains in $\mathcal{M}_1$, 
    \begin{equation}
        U(t,s)({\rho}_S)\in\mathcal{M}_1\quad \forall t,
    \end{equation}
    and is easy to obtain. The evolution operators $U(t,s)$ corresponds to the dynamics governed by $\mathcal{L}_{\mathrm{e}}$:
    \begin{equation}
        \frac{\mathrm{d}}{\mathrm{d}t}{\rho}_S(t)=\mathcal{L}_{\mathrm{e}}({\rho}_S(t)).
    \end{equation}
    \item[(3)] There exists a stochastic super-operator $S(\omega)$ with $\omega$ corresponding to some random space $\Omega$, such that for any ${\rho}_S\in\mathcal{M}_1$,
    \begin{equation}
        \mathbb{E}_\omega S(\omega)({\rho}_S)=\mathcal{L}_{\mathrm {h }}({\rho}_S) \quad \text { and } \quad S(\omega)({\rho}_S) \in \mathcal{M}_1 \quad \forall \omega .
    \end{equation}
\end{itemize}

After such a decomposition, the Lindblad equation can be written as the following integral form by Duhamel's principle:
\begin{equation}\label{eq:integral formula of density matrix}
\begin{aligned}
{\rho}_S(t)= & \sum_{M=0}^{+\infty} \int_0^t \int_{\Omega} \int_0^{t_M} \int_{\Omega} \cdots \int_0^{t_2} \int_{\Omega} \\
& \times U\left(t, t_M\right) S\left(t_M, \omega_M\right) U\left(t_M, t_{M-1}\right) S\left(t_{M-1}, \omega_{M-1}\right) \cdots \\
& \times U\left(t_2, t_1\right) S\left(t_1, \omega_1\right) U\left(t_1, 0\right)({\rho}_S(0)) \mathrm{d} \mu_{\omega_1} \mathrm{~d} t_1 \cdots \mathrm{d} \mu_{\omega_{M-1}} \mathrm{~d} t_{M-1} \mathrm{~d} \mu_{\omega_M} \mathrm{~d} t_M,
\end{aligned}
\end{equation}
where we used $\mu_{\omega}$ to denote the probability measure of $\omega$ and write
\begin{equation}
\mathbb{E}_\omega f(\omega)=\int_{\Omega} f(\omega) \mathrm{d} \mu_\omega .
\end{equation}

The expansion \cref{eq:integral formula of density matrix} allows us to use the Monte Carlo method to evaluate ${\rho}_S(t)$. Like the Schr$\mathrm{\ddot o}$dinger equation, We denote $\Xi=\left(\left(t_m\right),\left(\omega_m\right)\right)_{m \geqslant 1}$ one realization of the marked point process which is generated by an intensity function $\lambda(t,\omega)$.
Then, the density matrix (we drop the subscript $S$ for simplicity ) can be written as
\begin{equation}\label{eq:expectation formula of density matrix}
{\rho}(t)=\mathbb{E}_{\Xi}{\rho}_{\Xi}(t),
\end{equation}
where
\begin{equation}
\begin{aligned}
{\rho}_{\Xi}(t)= & \exp \left(\int_0^t \int_{\Omega} \lambda(s, \omega) \mathrm{d} \mu_\omega \mathrm{d} s\right) U\left(t, t_M\right) \widetilde{S}\left(t_M, \omega_M\right) \\
& \times U\left(t_M, t_{M-1}\right) \widetilde{S}\left(t_{M-1}, \omega_{M-1}\right) \cdots U\left(t_2, t_1\right) \widetilde{S}\left(t_1, \omega_1\right) U\left(t_1, 0\right){\rho}_{\Xi}(0),
\end{aligned}
\end{equation}
where $M$ is the number of marks in $\Xi$ before time $t$, and we have used the short-hand
\begin{equation}
\widetilde{S}\left(t_m, \omega_m\right)(\cdot)= S\left(t_m, \omega_m\right)(\cdot) / \lambda\left(t_m, \omega_m\right), \quad m=1, \ldots, M.
\end{equation}

We define
\begin{equation}
\eta(t)=\int_0^t \int_{\Omega} \lambda(s, \omega) \mathrm{d} \mu_\omega \mathrm{d} s, \quad
\widetilde{{\rho}}_{\Xi}(t)=\mathrm{e}^{-\eta(t)}{\rho}_{\Xi}(t) ,
\end{equation}
then $\eta$ and $|\tilde{\rho}_{\Xi}(t)$ satisfies
\begin{equation}\label{eq:eta of density matrix}
\frac{\mathrm{d} \eta}{\mathrm{d} t}=\int_{\Omega} \lambda(t, \omega) \mathrm{d} \mu_\omega,
\end{equation}
and
\begin{equation}\label{eq:rho_tilde}
    \frac{\mathrm{d}}{\mathrm{d} t}\tilde{\rho}_{\Xi}(t)=\mathcal{L}_{\mathrm{e}}(\tilde{\rho}_{\Xi}(t)), \quad t \in\left(t_{m-1}, t_m\right), \quad m>0.
\end{equation}
Then we have
\begin{equation}
\mathbb{P}\left(\text { no mark exists in }\left(s_1, s_2\right)\right)=\exp \left(-\left[\eta\left(s_2\right)-\eta\left(s_1\right)\right]\right) .
\end{equation}

Finally, with the \cref{eq:expectation formula of density matrix}, we can design the QKMC algorithm to evaluate ${\rho}(t)$ based on the above notation:
\begin{itemize}
    \item[(1)] Choose time step $\Delta t$, set $t\gets 0, \eta(t)\gets 0, |\tilde{\rho}_{\Xi}(t)\rangle\gets |\rho(0)\rangle$. 
    \item[(2)] Solve $\left|\tilde{\rho}_{\Xi}(t+\Delta t)\right\rangle$ according to \cref{eq:rho_tilde}.
    \item[(3)] If $t+\Delta t=T$, stop. Otherwise, generate a random number $Y$ obeying the uniform distribution in $[0,1]$ and solve $\eta(t+\Delta t)$ according to \cref{eq:eta of density matrix}, then 
    \begin{itemize}
        \item[(3.1)] If $\exp{(\eta(t)-\eta(t+\Delta t))}\le 1-Y$, set $t\gets t+\Delta t, \eta(t)\gets \eta(t+\Delta t), |\tilde{\rho}_{\Xi}(t)\rangle\gets |\tilde{\rho}_{\Xi}(t+\Delta t)\rangle$ and return to step $(2)$.
        \item[(3.2)] Otherwise, generate a mark $\omega\in\Omega$ according to the probability measure $\mu_{\omega}$, set $|\widetilde{\rho}_{\Xi}(t+\Delta t)\rangle \leftarrow \widetilde{S}(t+\Delta t, \omega)|\widetilde{\rho}_{\Xi}(t+\Delta t)\rangle, t\gets t+\Delta t, \eta(t)\gets \eta(t+\Delta t), |\tilde{\rho}_{\Xi}(t)\rangle\gets |\tilde{\rho}_{\Xi}(t+\Delta t)\rangle$ and return to step $(2)$.
    \end{itemize}
\end{itemize}

\section{The dissipative Ising model}\label{sec:model}
In this section, we give the dissipative Ising model and present the details of the QKMC method for the Lindblad equation based on this model.

The Hamiltonian of the dissipative Ising model satisfies\cite{weimer_simulation_2021} 
\begin{equation}\label{eq:Dissipative Ising model}
\begin{aligned}
    {H}&=\frac{g}{2} \sum_{i=1}^N \sigma_x^{(i)}+\frac{h}{2} \sum_{i=1}^N \sigma_z^{(i)}+\frac{V}{4} \sum_{\langle i j\rangle} \sigma_z^{(i)} \sigma_z^{(j)},
\end{aligned}
\end{equation}
where $\langle ij\rangle$ means that sites $i$ and $j$ are adjacent, $\gamma$ is the dissipative rate, and
\begin{equation}
\begin{aligned}
    \sigma_x^{(i)}&=\mathrm{Id}^{\otimes(i-1)} \otimes \sigma_x \otimes \mathrm{Id}^{\otimes(N-i)},\\
    \sigma_z^{(i)}&=\mathrm{Id}^{\otimes(i-1)} \otimes \sigma_z \otimes \mathrm{Id}^{\otimes(N-i)},\\
    \sigma_z^{(i)} \sigma_z^{(j)}&=\mathrm{Id}^{\otimes(i-1)} \otimes \sigma_z \otimes \mathrm{Id}^{\otimes(j-i-1)} \otimes \sigma_z \otimes \mathrm{Id}^{\otimes(N-j)},
\end{aligned}
\end{equation}
where $\mathrm{Id}$ is the $2\times2$ identity matrix, and
\begin{equation}
    \sigma_{x}=
    \begin{bmatrix}
        0&1\\
        1&0
    \end{bmatrix},
    \sigma_{z}=
    \begin{bmatrix}
        1&0\\
        0&-1
    \end{bmatrix}.
\end{equation}
The jump operators satisfy
\begin{equation}
    {c}_i=\mathrm{Id}^{\otimes(i-1)} \otimes \sqrt{\gamma}\sigma_{-} \otimes \mathrm{Id}^{\otimes(N-i)},
\end{equation}
where
\begin{equation}
    \sigma_{-}=
    \begin{bmatrix}
        0&0\\
        1&0
    \end{bmatrix}.
\end{equation}

For this model, We denote $\mathcal{L}_1,\mathcal{L}_2,\mathcal{L}_3,\mathcal{L}_4,\mathcal{L}_5$ as follows
\begin{equation}
\begin{aligned}
    &\mathcal{L}_1({\rho})=-\mathrm{i}\left[\frac{g}{2} \sum_i \sigma_x^{(i)},{\rho}\right],\quad 
    \mathcal{L}_2({\rho})=-\mathrm{i}\left[\frac{h}{2} \sum_i \sigma_z^{(i)},{\rho}\right],\quad 
    \mathcal{L}_3({\rho})=-\mathrm{i}\left[\frac{V}{4} \sum_{\langle ij\rangle} \sigma_z^{(i)}\sigma_z^{(j)},{\rho}\right],\\
    &\mathcal{L}_4({\rho})=\sum_i{c}_i {\rho} {c}_i^{\dagger},\quad
    \mathcal{L}_5({\rho})=-\frac{1}{2}\sum_i\left\{{c}_i^{\dagger} {c}_i, {\rho}\right\},
\end{aligned}
\end{equation}
then we have
\begin{equation}
    \mathcal{L}(\hat{\rho})=\mathcal{L}_1(\hat{\rho})+\mathcal{L}_2(\hat{\rho})+\mathcal{L}_3(\hat{\rho})+\mathcal{L}_4(\hat{\rho})+\mathcal{L}_5(\hat{\rho}).
\end{equation}

We choose the class $\mathcal{M}_1$ as follows:
\begin{equation}
\mathcal{M}_1=\left\{{\rho}={\rho}_1 \otimes \cdots \otimes{\rho}_N|{\rho}_i\quad\text{is a $2\times2$ matrix}, \forall i\right\}.
\end{equation}
It is evident that any density matrix ${\rho}\in\mathcal{M}$ can be written as the linear combination of elements in $\mathcal{M}_1$, but it's worth noting that neither $\mathcal{M}_1$ nor $\mathcal{M}$ are vector Spaces. The following proposition is essential for the design of the QKMC algorithm.
\begin{prop}\label{prop: tensor prop}
    If $U(t,s)$ is the evolution operator, which corresponds to the following dynamics:
    \begin{equation}
        \partial_t\rho=\mathcal{L}(\rho),
    \end{equation}
    where $\mathcal{L}(\rho)$ has the following expression when $\rho=\rho_1\otimes\rho_2\otimes\cdots\otimes\rho_N$:
    \begin{equation}
    \begin{aligned}
        \mathcal{L}(\rho_1\otimes\rho_2\otimes\cdots\otimes\rho_N)
        &=(\mathcal{L}^{(1)}(\rho_1)\otimes\rho_2\otimes\cdots\otimes\rho_N)+(\rho_1\otimes\mathcal{L}^{(2)}(\rho_2)\otimes\cdots\otimes\rho_N)\\
        &+\cdots+(\rho_1\otimes\rho_2\otimes\cdots\otimes\mathcal{L}^{(N)}(\rho_N)),
    \end{aligned}
    \end{equation}
    and $U^{(k)}(t,s),k=1,2,\cdots, N$ are the evolution operators which correspond to the following dynamics, respectively:
    \begin{equation}
        \partial_t(\rho_k)=\mathcal{L}^{(k)}(\rho_k),k=1,2,\cdots, N,
    \end{equation}
    then we have
    \begin{equation}\label{eq: tensor prop}
        U(t,s)\left(\rho_1(s)\otimes\rho_2(s)\otimes\cdots\otimes\rho_N(s)\right)=U^{(1)}(t,s)\rho_1(s)\otimes U^{(2)}(t,s)\rho_2(s)\otimes\cdots\otimes U^{(N)}(t,s)\rho_N(s).
    \end{equation}
\end{prop}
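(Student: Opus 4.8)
The plan is to exhibit the right-hand side of \cref{eq: tensor prop}, as a function of $t$, as a solution of the very initial value problem that defines the left-hand side, and then conclude by uniqueness. First I would fix, for each $k$, the curve $\rho_k(t):=U^{(k)}(t,s)\rho_k(s)$, so that $\partial_t\rho_k(t)=\mathcal{L}^{(k)}(\rho_k(t))$ with $\rho_k(s)$ the prescribed initial datum, and then set
\begin{equation}
\sigma(t):=\rho_1(t)\otimes\rho_2(t)\otimes\cdots\otimes\rho_N(t),
\end{equation}
noting that $\sigma(s)=\rho_1(s)\otimes\cdots\otimes\rho_N(s)$ is exactly the matrix on which $U(t,s)$ acts on the left-hand side.

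The key step is to differentiate $\sigma$. Since the tensor product is multilinear and each factor is $C^1$ in $t$, the Leibniz rule (obtained by a one-line induction on $N$ from bilinearity of $\otimes$) gives
\begin{equation}
\frac{\mathrm{d}}{\mathrm{d}t}\sigma(t)=\sum_{k=1}^{N}\rho_1(t)\otimes\cdots\otimes\Bigl(\tfrac{\mathrm{d}}{\mathrm{d}t}\rho_k(t)\Bigr)\otimes\cdots\otimes\rho_N(t)=\sum_{k=1}^{N}\rho_1(t)\otimes\cdots\otimes\mathcal{L}^{(k)}(\rho_k(t))\otimes\cdots\otimes\rho_N(t).
\end{equation}
Now the crucial observation is that $\sigma(t)$ is itself a pure tensor product for every $t$, so the structural hypothesis on $\mathcal{L}$ may be applied to $\sigma(t)$ verbatim, and the sum on the right is precisely $\mathcal{L}(\sigma(t))$. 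Hence $\sigma$ solves $\partial_t\sigma=\mathcal{L}(\sigma)$ with $\sigma(s)=\rho_1(s)\otimes\cdots\otimes\rho_N(s)$.

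Finally I would invoke uniqueness: $\mathcal{L}$ is linear, hence globally Lipschitz, on the finite-dimensional space of $2^N\times 2^N$ matrices, so the Cauchy problem $\partial_t\rho=\mathcal{L}(\rho)$ with datum $\rho_1(s)\otimes\cdots\otimes\rho_N(s)$ has a unique solution, which by definition of the evolution operator is $U(t,s)\bigl(\rho_1(s)\otimes\cdots\otimes\rho_N(s)\bigr)$; since $\sigma$ is also a solution, the two agree, which is \cref{eq: tensor prop}. The only point worth flagging as an obstacle is the legitimacy of the very first move — applying the decomposition of $\mathcal{L}$ to $\sigma(t)$ — and this is exactly where it matters that each $U^{(k)}$ keeps its factor "in place", so that $\sigma(t)$ never leaves the set of product states and we never need to know how the stated decomposition behaves on non-product matrices. (An alternative route is a Trotter-type argument, expressing $U$ and the $U^{(k)}$ as limits of products of short-time propagators and using $\mathrm{e}^{\tau(A_1+\cdots+A_N)}$-factorizations on product states, but the uniqueness argument above is cleaner and avoids convergence estimates.)
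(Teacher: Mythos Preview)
Your proposal is correct and follows essentially the same approach as the paper: differentiate the tensor product $U^{(1)}(t,s)\rho_1(s)\otimes\cdots\otimes U^{(N)}(t,s)\rho_N(s)$ via the Leibniz rule, recognize the result as $\mathcal{L}$ applied to that same tensor product, and conclude by uniqueness for the linear ODE. If anything, your version is slightly more explicit than the paper's in flagging why the structural hypothesis on $\mathcal{L}$ is applicable at each $t$ (namely, that $\sigma(t)$ stays a product state) and in spelling out the Lipschitz/uniqueness justification.
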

\begin{proof}
    By direct calculation, we have
    \begin{equation}
    \begin{aligned}
         &\frac{\mathrm{d}}{\mathrm{d}t}\left(U^{(1)}(t,s)\rho_1(s)\otimes U^{(2)}(t,s)\rho_2(s)\otimes\cdots\otimes U^{(N)}(t,s)\rho_N(s)\right)\\
         &=\left(\frac{\mathrm{d}}{\mathrm{d}t}\left(U^{(1)}(t,s)\rho_1(s)\right)\otimes U^{(2)}(t,s)\rho_2(s)\otimes\cdots\otimes U^{(N)}(t,s)\rho_N(s)\right)\\
         &+\left(U^{(1)}(t,s)\rho_1(s)\otimes \frac{\mathrm{d}}{\mathrm{d}t}\left( U^{(2)}(t,s)\rho_2(s)\right)\otimes\cdots\otimes U^{(N)}(t,s)\rho_N(s)\right)\\
         &+\cdots\\
         &+\left(U^{(1)}(t,s)\rho_1(s)\otimes U^{(2)}(t,s)\rho_2(s)\otimes\cdots\otimes \frac{\mathrm{d}}{\mathrm{d}t} \left(U^{(N)}(t,s)\rho_N(s)\right)\right)\\
         &=\left(\mathcal{L}^{(1)}\left(U^{(1)}(t,s)\rho_1(s)\right)\otimes U^{(2)}(t,s)\rho_2(s)\otimes\cdots\otimes U^{(N)}(t,s)\rho_N(s)\right)\\
         &+\left(U^{(1)}(t,s)\rho_1(s)\otimes \mathcal{L}^{(2)}\left( U^{(2)}(t,s)\rho_2(s)\right)\otimes\cdots\otimes U^{(N)}(t,s)\rho_N(s)\right)\\
         &+\cdots\\
         &+\left(U^{(1)}(t,s)\rho_1(s)\otimes U^{(2)}(t,s)\rho_2(s)\otimes\cdots\otimes \mathcal{L}^{(N)} \left(U^{(N)}(t,s)\rho_N(s)\right)\right)\\
         &=\mathcal{L}\left(U^{(1)}(t,s)\rho_1(s)\otimes U^{(2)}(t,s)\rho_2(s)\otimes\cdots\otimes U^{(N)}(t,s)\rho_N(s)\right).
    \end{aligned}
    \end{equation}
    Therefore, the left and right sides of the \cref{eq: tensor prop}  satisfy the same dynamics equation. When $t=s$, the left side of the \cref{eq: tensor prop} equals the right side, so they have the same initial value. It follows that \cref{eq: tensor prop} valid from the theory of ordinary differential equation.
\end{proof}

According to the above proposition, it follows that the evolution corresponding to $\mathcal{L}_1+\mathcal{L}_2+\mathcal{L}_4+\mathcal{L}_5$ remains the density matrix in $\mathcal{M}_1$, but which corresponding to $\mathcal{L}_3$ change the density matrix out $\mathcal{M}_1$. Therefore, we choose
\begin{equation}
    \mathcal{L}_{\mathrm{e}}=\mathcal{L}_{1}+\mathcal{L}_{2}+\mathcal{L}_{4}+\mathcal{L}_{5},\quad \mathcal{L}_{\mathrm{h}}=\mathcal{L}_{3},
\end{equation}
then the $\mathcal{L}^{(k)}$ satisfies
\begin{equation}
    \mathcal{L}^{(k)}(\rho_k)=-\mathrm{i}\left[\frac{g}{2}\sigma_x+\frac{h}{2}\sigma_z,\rho_k\right]+\gamma\sigma_{-}\rho_k\sigma_{-}^{\dagger}-\frac{\gamma}{2}\left\{\sigma_{-}^{\dagger}\sigma_{-},\rho_k\right\}.
\end{equation}

The sample space can be chosen as 
    \begin{equation}
        \begin{aligned}
            \Omega=\{(j, k,l) \mid j, k=1, \ldots, N, \text{$j<k$ and $j$ is adjancent to $k$},l=1,2\},
        \end{aligned}
    \end{equation}
    and the probability measure is given by
    \begin{equation}
        \mathbb{P}\left(\omega=\omega_0\right)=1 /|\Omega| \quad \forall \omega_0 \in \Omega .
    \end{equation}
    The operator $S(\omega)$ is set to be
    \begin{equation}
        S(t, \omega)({\rho})= \begin{cases}
        -\mathrm{i}|\Omega|\frac{V}{4} \sigma_z^{(j)}\sigma_z^{(k)}{\rho} & \text { if } \omega=(j, k,1) ,\\
        \mathrm{i}|\Omega|\frac{V}{4} {\rho}\sigma_z^{(j)}\sigma_z^{(k)} & \text { if } \omega=(j, k,2) ,
        \end{cases}
    \end{equation}
    where $|\Omega|$ is the cardinality of $\Omega$. The intensity function is chosen as
    \begin{equation}\label{eq:intensity function}
        \lambda(t, \omega)= 
        \begin{cases} 
        |\Omega|\frac{V}{4} & \text { if } \omega=(j, k,1) \in \Omega_2,\\
        |\Omega|\frac{V}{4} & \text { if } \omega=(j, k,2) \in \Omega_2,
        \end{cases}
    \end{equation}
    so that the operator $\widetilde{S}(t,\omega)$  does not change the magnitude of the density matrix.

\section{Analysis of the sampling variance}\label{sec:error}
In this section, we will analyze the evolution of the sampling variance of the QKMC algorithm for the Lindblad equation.  Firstly, there are many norms for matrix, and we should choose one of them to analyze the variance. In this article, we choose the Frobeninus norm:
\begin{equation}
    \|\mathbf{A}\|_F=\sqrt{\operatorname{Tr}\left(\mathbf{A}^{\dagger} \mathbf{A}\right)}=\sqrt{\sum_{i=1}^m \sum_{j=1}^n |A_{i j}|^2}, \quad \mathbf{A} \in \mathbb{C}^{m \times n}.
\end{equation}
We introduce the matrix straightening operation: if $A_{n\times p}=(\vec{a}_1,\vec{a}_2,\cdots,\vec{a}_p)$(where $\vec{a}_i,1\le i\le p$ is $n\times1$ column vector) is a matrix, then $\operatorname{vec}(A)$ is the straightening of matrix $A$ which is defined as follows
\begin{equation}
\operatorname{vec}(A)=\left(\begin{array}{c}
\vec{a}_1 \\
\vec{a}_2 \\
\vdots \\
\vec{a}_p
\end{array}\right).
\end{equation}
Then the Frobeninus norm of matrix $A$ is equivalent to the $2$ norm of its straightening $\operatorname{vec}(A)$, i.e.,
\begin{equation}
    \|\mathbf{A}\|_F=\|\operatorname{vec}(A)\|_2.
\end{equation}
The following proposition is important for our analysis (their proof can be found in \cite{golub2013matrix}):
\begin{prop}\label{prop: straightening}
    The straightening of a matrix has the following properties: 
    \begin{equation}
        \operatorname{vec}(A X B)=\left(B^T \otimes A\right) \operatorname{vec}(X).
    \end{equation}
\end{prop}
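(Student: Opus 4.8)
The plan is to exploit bilinearity. Both sides of the claimed identity depend linearly on the entries of $X$: the map $X\mapsto\operatorname{vec}(AXB)$ is linear because matrix multiplication and the straightening operation are, and $X\mapsto(B^T\otimes A)\operatorname{vec}(X)$ is linear for the same reason. Hence it suffices to verify the identity on the elementary matrices $E_{ij}=e_ie_j^T$, where the $e_i$ run over the standard basis vectors of the appropriate dimension, and then extend to an arbitrary $X=\sum_{i,j}X_{ij}E_{ij}$ by linearity.

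First I would record the single fact about $\operatorname{vec}$ that does all the work: for a rank-one matrix $pq^T$ one has $\operatorname{vec}(pq^T)=q\otimes p$. This is immediate from the definition, since the $k$-th column of $pq^T$ is $q_k\,p$, so stacking the columns yields the block vector whose $k$-th block equals $q_k\,p$, which is exactly $q\otimes p$. In particular $\operatorname{vec}(E_{ij})=\operatorname{vec}(e_ie_j^T)=e_j\otimes e_i$.

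Next, for the left-hand side I would compute $AE_{ij}B=(Ae_i)(e_j^TB)=(Ae_i)(B^Te_j)^T$, which is again rank one, so by the previous step $\operatorname{vec}(AE_{ij}B)=(B^Te_j)\otimes(Ae_i)$. For the right-hand side I would apply the mixed-product rule for the Kronecker product, $(M\otimes N)(u\otimes v)=(Mu)\otimes(Nv)$, with $M=B^T$, $N=A$, $u=e_j$, $v=e_i$, obtaining $(B^T\otimes A)\operatorname{vec}(E_{ij})=(B^T\otimes A)(e_j\otimes e_i)=(B^Te_j)\otimes(Ae_i)$. The two expressions coincide on every $E_{ij}$, and summing against $X_{ij}$ finishes the proof.

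There is no genuine obstacle here; the only point requiring care is the indexing convention, namely that $\operatorname{vec}$ stacks columns (so the entry $X_{kl}$ lands in position $(l-1)n+k$) and that this is consistent with the $m\times n$ block layout of $B^T\otimes A$, whose $(s,l)$ block is $B_{ls}A$. If one prefers not to invoke the mixed-product rule as a black box, the same identity can be checked by a direct entrywise computation: both $\big(\operatorname{vec}(AXB)\big)_{(s-1)m+r}$ and $\big((B^T\otimes A)\operatorname{vec}(X)\big)_{(s-1)m+r}$ expand to $\sum_{k,l}A_{rk}X_{kl}B_{ls}$, which settles the claim; this is essentially the route taken in the cited reference, and either version would be a short and self-contained substitute for the textbook proof.
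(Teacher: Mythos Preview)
Your proof is correct and self-contained. The paper does not actually prove this proposition; it merely states it and defers to \cite{golub2013matrix} for the proof, so your argument via linearity, the rank-one identity $\operatorname{vec}(pq^T)=q\otimes p$, and the mixed-product rule is a valid and standard substitute that matches the textbook route you allude to.
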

\begin{prop}\label{prop:kronecker product}
    The Frobeninus norm of the Kronecker product satisfies the following equation:
    \begin{equation}
        \|A\otimes B\|_{F}=\|A\|_{F}\cdot\|B\|_{F}.
    \end{equation}
\end{prop}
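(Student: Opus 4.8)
The plan is to reduce the claim to the trace definition of the Frobenius norm, $\|\mathbf{A}\|_F^2 = \operatorname{Tr}(\mathbf{A}^\dagger\mathbf{A})$, and then exploit two elementary facts about the Kronecker product: the mixed-product rule $(A\otimes B)(C\otimes D) = (AC)\otimes(BD)$ (valid whenever the ordinary products are defined) and the multiplicativity of the trace on tensor products, $\operatorname{Tr}(X\otimes Y) = \operatorname{Tr}(X)\operatorname{Tr}(Y)$. Both are standard and follow immediately from the block-matrix definition $(A\otimes B)_{(i-1)p+k,\,(j-1)q+l} = A_{ij}B_{kl}$ for $A\in\mathbb{C}^{m\times n}$ and $B\in\mathbb{C}^{p\times q}$.

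First I would note that $(A\otimes B)^\dagger = A^\dagger\otimes B^\dagger$, which is read off directly from the entrywise definition together with conjugation and the index transposition. Applying the mixed-product rule then gives
\begin{equation}
(A\otimes B)^\dagger(A\otimes B) = (A^\dagger\otimes B^\dagger)(A\otimes B) = (A^\dagger A)\otimes(B^\dagger B).
\end{equation}
Taking the trace of both sides and using $\operatorname{Tr}(X\otimes Y) = \operatorname{Tr}(X)\operatorname{Tr}(Y)$ yields
\begin{equation}
\|A\otimes B\|_F^2 = \operatorname{Tr}\!\big((A^\dagger A)\otimes(B^\dagger B)\big) = \operatorname{Tr}(A^\dagger A)\,\operatorname{Tr}(B^\dagger B) = \|A\|_F^2\,\|B\|_F^2,
\end{equation}
and taking the nonnegative square root finishes the argument.

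An equally valid alternative, more in keeping with the elementary nature of the statement, is to argue directly at the level of matrix entries: from $(A\otimes B)_{(i-1)p+k,\,(j-1)q+l} = A_{ij}B_{kl}$ one has $\|A\otimes B\|_F^2 = \sum_{i,j,k,l}|A_{ij}|^2|B_{kl}|^2$, and this quadruple sum factors as $\big(\sum_{i,j}|A_{ij}|^2\big)\big(\sum_{k,l}|B_{kl}|^2\big) = \|A\|_F^2\|B\|_F^2$. The only point requiring any care — and it is the main (very mild) obstacle — is checking that the index maps $(i,k)\mapsto(i-1)p+k$ and $(j,l)\mapsto(j-1)q+l$ are bijections onto $\{1,\dots,mp\}$ and $\{1,\dots,nq\}$ respectively, so that each entry of $A\otimes B$ is counted exactly once; this is precisely the content of the standard block-form definition of the Kronecker product, so no genuine difficulty arises.
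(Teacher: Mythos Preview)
Your proof is correct. The paper does not give its own proof of this proposition; it simply states the result and refers the reader to \cite{golub2013matrix} for a proof, so there is nothing substantive to compare against. Either of your two arguments (the trace/mixed-product route or the direct entrywise factorisation) is a complete and standard justification, and indeed both are essentially what one finds in the cited reference.
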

\begin{prop}\label{prop: straightening eigenvalue}
    If $\lambda_i,1\le i\le m,\mu_j,1\le j\le n$ are the corresponding eigenvalues of $A\in\mathbb{C}^{m\times m},B\in\mathbb{C}^{n\times n}$, then $\lambda_i\mu_j,1\le i\le m,1\le j\le n$ are the eigenvalues of $A\otimes B.$
\end{prop}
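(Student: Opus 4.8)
The plan is to reduce the problem to triangular matrices via Schur's theorem and then exploit the standard mixed-product rule for Kronecker products, $(P \otimes Q)(R \otimes S) = (PR) \otimes (QS)$, valid whenever the ordinary products are defined.

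First I would invoke Schur's unitary triangularization theorem: there are unitary matrices $U \in \mathbb{C}^{m \times m}$ and $V \in \mathbb{C}^{n \times n}$ and upper triangular matrices $R, S$ with $U^{\dagger} A U = R$ and $V^{\dagger} B V = S$, the diagonal of $R$ being $\lambda_1, \dots, \lambda_m$ and that of $S$ being $\mu_1, \dots, \mu_n$. Applying the mixed-product rule twice gives
\[
(U \otimes V)^{\dagger} (A \otimes B)(U \otimes V) = (U^{\dagger} A U)\otimes(V^{\dagger} B V) = R \otimes S ,
\]
and a third application gives $(U\otimes V)^{\dagger} (U \otimes V) = (U^{\dagger} U)\otimes(V^{\dagger} V) = I_m \otimes I_n = I_{mn}$, so $U \otimes V$ is unitary. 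Hence $A \otimes B$ is similar to $R \otimes S$, and the two have the same eigenvalues with multiplicity.

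It then remains to read off the eigenvalues of $R \otimes S$. I would index the $mn$ rows and columns by pairs $(i,j)$ in lexicographic order, so that the entry of $R \otimes S$ in row $(i,j)$ and column $(k,l)$ equals $R_{ik} S_{jl}$. A global row index exceeds a global column index precisely when $i > k$, or $i = k$ and $j > l$; in the first case $R_{ik} = 0$ (as $R$ is upper triangular) and in the second $S_{jl} = 0$ (as $S$ is upper triangular), so in both cases $R_{ik} S_{jl} = 0$. Thus $R \otimes S$ is again upper triangular, with diagonal entries $R_{ii} S_{jj} = \lambda_i \mu_j$. Since the eigenvalues of an upper triangular matrix are its diagonal entries, the eigenvalues of $A \otimes B$ are exactly $\{\lambda_i \mu_j : 1 \le i \le m,\ 1 \le j \le n\}$, counted with multiplicity.

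The only delicate point is the bookkeeping in the last step — matching the block structure of the Kronecker product to the chosen lexicographic ordering so as to confirm $R \otimes S$ stays upper triangular; everything else is a direct application of Schur's theorem and the mixed-product identity. If one prefers to avoid Schur's theorem, an alternative is to prove the claim first for diagonalizable $A$ and $B$ by noting $(A \otimes B)(x \otimes y) = (Ax)\otimes(By)$, so that tensor products of eigenvectors are eigenvectors of $A \otimes B$, and then pass to the general case by density of diagonalizable matrices together with continuity of the spectrum; but the Schur route is shorter and delivers the multiplicities directly.
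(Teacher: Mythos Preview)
Your argument is correct and is in fact the standard textbook proof of this identity. The paper, however, does not give its own proof of this proposition: it simply states the result together with Propositions~\ref{prop: straightening} and~\ref{prop:kronecker product} and defers all three to the reference \cite{golub2013matrix} (Golub and Van Loan, \emph{Matrix Computations}). So there is nothing to compare against beyond noting that the Schur-triangularization route you take is precisely the one found in that reference, and your bookkeeping for the triangularity of $R\otimes S$ is fine.
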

In the following, for convenience, we will denote $\operatorname{vec}(\rho(t))$ by $|\rho(t)\rangle$ and $\operatorname{vec}(U\left(t, s\right)),\operatorname{vec}(S\left(t, \omega\right))$ by $\hat{U}\left(t, s\right),\hat{S}\left(t, \omega\right)$, where $\operatorname{vec}(U\left(t, s\right)),\operatorname{vec}(S\left(t, \omega\right))$ are defined as follows
\begin{equation}
    \operatorname{vec}(U\left(t, s\right)\rho(t))=\operatorname{vec}(U\left(t, s\right))\operatorname{vec}(\rho(t)),\quad \operatorname{vec}(S\left(t, \omega\right)\rho(t))=\operatorname{vec}(S\left(t, \omega\right))\operatorname{vec}(\rho(t)).
\end{equation}

Using the above notation, we can rewrite \cref{eq:integral formula of density matrix} as follows
\begin{equation}\label{eq:integral formula of straightening density matrix}
\begin{aligned}
|\rho(t)\rangle= & \sum_{M=0}^{+\infty} \int_0^t \int_{\Omega} \int_0^{t_M} \int_{\Omega} \cdots \int_0^{t_2} \int_{\Omega} \\
& \times \hat{U}\left(t, t_M\right) 
\hat{S}\left(t_M, \omega_M\right)
\hat{U}\left(t_M, t_{M-1}\right)
\hat{S}\left(t_{M-1}, \omega_{M-1}\right)
\cdots \\
& \times 
\hat{U}\left(t_2, t_1\right)
\hat{S}\left(t_1, \omega_1\right)
\hat{U}\left(t_1, 0\right)
|\rho(0)\rangle
\mathrm{d} \mu_{\omega_1} \mathrm{~d} t_1 \cdots \mathrm{d} \mu_{\omega_{M-1}} \mathrm{~d} t_{M-1} \mathrm{~d} \mu_{\omega_M} \mathrm{~d} t_M,
\end{aligned}
\end{equation}
and \cref{eq:expectation formula of density matrix} becomes
\begin{equation}\label{eq:expectation formula of straightening density matrix}
|\rho(t)\rangle=\mathbb{E}_{\Xi}|\rho_{\Xi}(t)\rangle,
\end{equation}
where
\begin{equation}\label{eq: single trajectory of rho}
\begin{aligned}
|\rho_{\Xi}(t)\rangle= 
& \exp \left(\int_0^t \int_{\Omega} \lambda(s, \omega) \mathrm{d} \mu_\omega \mathrm{d} s\right) 
\hat{U}\left(t, t_M\right) 
\widehat{\widetilde{S}}\left(t_M, \omega_M\right) \\
& \times \hat{U}\left(t_M, t_{M-1}\right) 
\widehat{\widetilde{S}}\left(t_{M-1}, \omega_{M-1}\right) \cdots \hat{U}\left(t_2, t_1\right) 
\widehat{\widetilde{S}}\left(t_1, \omega_1\right) 
\hat{U}\left(t_1, 0\right)
|\rho_{\Xi}(0)\rangle.
\end{aligned}
\end{equation}

Suppose $N_{\text{traj}}$ trajectories are used in the simulation. Then, the numerical solution is
\begin{equation}
    |\rho_{\text{num}}(t)\rangle=\frac{1}{N_{\text{traj}}}\sum_{i=1}^{N_{\text{traj}}}|\rho_{\Xi_i}(t)\rangle, 
\end{equation}
where $\Xi_i$ is the realization of the marked point process corresponding to the $i$th trajectory, and $\Xi_i$ and $\Xi_j$ are independent if $i\neq j$. Define $|\rho_{\text{err}}(t)\rangle$ as the difference between the numerical solution and the exact solution:
\begin{equation}
    |\rho_{\text{err}}(t)\rangle=|\rho_{\text{num}}(t)\rangle-|\rho(t)\rangle.
\end{equation}
It is obvious that $\mathbb{E}[|\rho_{\text{err}}(t)\rangle]=0$, hence the variance of $|\rho_{\text{err}}(t)\rangle$ is
\begin{equation}\label{eq: variance}
\begin{aligned}
    \sigma^2&=\mathbb{E}[\||\rho_{\text{err}}(t)\rangle\|_2]=\mathbb{E}[\langle\rho_{\text{err}}(t)||\rho_{\text{err}}(t)\rangle]\\
    &=\frac{1}{N_{\text {traj }}^2} \sum_{i=1}^{N_{\text {traj }}} \sum_{j=1}^{N_{\text {traj }}} \mathbb{E}_{\Xi_i, \Xi_j}\left\langle\rho_{\Xi_i}(t) \mid \rho_{\Xi_j}(t)\right\rangle -\langle\rho(t) \mid \rho(t)\rangle \\
    &=\frac{1}{N_{\text {traj }}^2} \sum_{i,j=1,i\neq j}^{N_{\text {traj }}} \mathbb{E}_{\Xi_i, \Xi_j}\left\langle\rho_{\Xi_i}(t) \mid \rho_{\Xi_j}(t)\right\rangle 
    +\frac{1}{N_{\text {traj }}^2} \sum_{i=j=1}^{N_{\text {traj }}} \mathbb{E}_{\Xi_i, \Xi_j}\left\langle\rho_{\Xi_i}(t) \mid \rho_{\Xi_j}(t)\right\rangle -\langle\rho(t) \mid \rho(t)\rangle\\
    &=\frac{N_{\text {traj }}(N_{\text {traj }}-1)}{N_{\text {traj }}^2}\langle\rho(t) \mid \rho(t)\rangle
    +\frac{1}{N_{\text {traj }}} \mathbb{E}_{\Xi}\left\langle\rho_{\Xi}(t) \mid \rho_{\Xi}(t)\right\rangle-\langle\rho(t) \mid \rho(t)\rangle\\
    &=\frac{1}{N_{\text {traj }}} \mathbb{E}_{\Xi}\left\langle\rho_{\Xi}(t) \mid \rho_{\Xi}(t)\right\rangle-\frac{1}{N_{\text {traj }}} \langle\rho(t) \mid \rho(t)\rangle\\
    &\le \frac{1}{N_{\text {traj }}} \mathbb{E}_{\Xi}\left\langle\rho_{\Xi}(t) \mid \rho_{\Xi}(t)\right\rangle.
\end{aligned}
\end{equation}
To estimate $\mathbb{E}_{\Xi}\left\langle\rho_{\Xi}(t) \mid \rho_{\Xi}(t)\right\rangle$, we need the following propositions:
\begin{prop}\label{prop: L^k}
Let $\hat{\mathcal{L}}^{(k)}$ be the straightening of ${\mathcal{L}}^{(k)}$, i.e.,
    \begin{equation}
    \begin{aligned}
        \hat{\mathcal{L}}^{(k)}&=-\mathrm{i}\left(\mathrm{Id}\otimes\left(\frac{g}{2}\sigma_x+\frac{h}{2}\sigma_z\right)-\left(\frac{g}{2}\sigma_x+\frac{h}{2}\sigma_z\right)^{T}\otimes\mathrm{Id}\right)\\
        &+\gamma\sigma_{-}^{*}\otimes\sigma_{-}-\frac{\gamma}{2}\left(\mathrm{Id}\otimes\left(\sigma_{-}^{\dagger}\sigma_{-}\right)+\left(\sigma_{-}^{\dagger}\sigma_{-}\right)^{T}\otimes\mathrm{Id}\right),
    \end{aligned}
    \end{equation}
    then the eigenvalues of $\hat{\mathcal{L}}^{(k)}$ consist of zero and some complex numbers whose real parts are negative, and $\hat{\mathcal{L}}^{(k)}$ is diagonalizable.
\end{prop}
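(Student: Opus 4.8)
The plan is to reduce the statement to an elementary spectral analysis of a single $4\times 4$ matrix whose entries depend affinely on $g$, $h$, $\gamma$. Since $\rho_k$ is $2\times2$, its straightening lies in $\mathbb{C}^{4}$, and by \cref{prop: straightening} the operator $\hat{\mathcal{L}}^{(k)}$ is precisely the matrix written in the statement. Two facts are immediate because $\mathcal{L}^{(k)}$ is itself of Lindblad form. First, trace preservation $\operatorname{Tr}(\mathcal{L}^{(k)}(\rho_k))=0$ for all $\rho_k$ becomes, after straightening, $\operatorname{vec}(\mathrm{Id})^{\dagger}\hat{\mathcal{L}}^{(k)}=0$, so $\hat{\mathcal{L}}^{(k)}$ is singular and $0$ is an eigenvalue. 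Second, $\mathrm{e}^{t\mathcal{L}^{(k)}}$ is completely positive and trace preserving, hence a contraction in trace norm, so $\{\mathrm{e}^{t\hat{\mathcal{L}}^{(k)}}\}_{t\ge 0}$ is bounded and every eigenvalue of $\hat{\mathcal{L}}^{(k)}$ has non-positive real part. What remains is to (i) exclude nonzero eigenvalues on the imaginary axis and (ii) prove diagonalizability.

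For (i) I would pass to the Hilbert--Schmidt orthonormal Pauli basis $\{\mathrm{Id},\sigma_x,\sigma_y,\sigma_z\}/\sqrt2$ of $2\times2$ matrices; since $\mathcal{L}^{(k)}$ maps Hermitian matrices to Hermitian matrices, in this basis $\hat{\mathcal{L}}^{(k)}$ is unitarily equivalent to a real $4\times4$ matrix, and trace preservation makes its first row vanish, so it has the form $\left(\begin{smallmatrix}0 & 0\\ \bo b & A\end{smallmatrix}\right)$ with $\bo b\in\mathbb{R}^{3}$ and $A$ a fixed real $3\times3$ matrix (the generator of the Bloch-vector dynamics). Cofactor expansion gives characteristic polynomial $-\mu\,\chi_A(\mu)$, so the nonzero eigenvalues of $\hat{\mathcal{L}}^{(k)}$ are exactly those of $A$. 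I would then compute $\chi_A(\mu)=\mu^{3}+a_2\mu^{2}+a_1\mu+a_0$ explicitly --- its coefficients being polynomials in $g,h,\gamma$, with $A$ close to $\operatorname{diag}(-\gamma/2,-\gamma/2,-\gamma)$ when $g,h$ are small --- and apply the Routh--Hurwitz criterion, i.e. verify $a_2>0$, $a_0>0$, and $a_2a_1>a_0$ for every $\gamma>0$. This forces $\operatorname{Re}\mu<0$ for all three roots of $\chi_A$; in particular $0\notin\operatorname{spec}(A)$, so the zero eigenvalue of $\hat{\mathcal{L}}^{(k)}$ is simple.

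For (ii), whenever $\chi_A$ has three distinct roots the matrix $A$ is diagonalizable, and then so is $\hat{\mathcal{L}}^{(k)}$, since its fourth eigenvalue $0$ is simple and distinct from $\operatorname{spec}(A)$. Only the parameter values lying on the (lower-dimensional) locus where $\chi_A$ has a repeated root need separate treatment: there I would check by a direct rank computation that $A-\mu\mathrm{Id}$ has nullity equal to the algebraic multiplicity of $\mu$, i.e. that no Jordan block forms. This last point --- ruling out a Jordan block at the degenerate parameter values --- together with keeping the Routh--Hurwitz inequalities clean, is the part I expect to require the most care; everything else is dictated by basis-independent properties of completely positive trace-preserving semigroups. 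As a cross-check, the simplicity of $0$ and the absence of other peripheral eigenvalues also follow from the irreducibility of the quantum Markov semigroup generated by $\mathcal{L}^{(k)}$, which holds because $\sigma_-$ and $\sigma_-^{\dagger}$ generate all of $M_2(\mathbb{C})$.
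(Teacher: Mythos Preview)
Your approach is considerably more detailed than the paper's, which simply remarks that $\hat{\mathcal{L}}^{(k)}$ is a $4\times4$ matrix and can be checked directly (by hand or in \textsc{Matlab}/\textsc{Mathematica}), or alternatively invokes the theory of relaxing completely positive semigroups. Your reduction to the $3\times3$ Bloch generator $A$ and the Routh--Hurwitz plan for part~(i) is correct and goes through cleanly: one finds
\[
A=\begin{pmatrix}-\gamma/2 & -h & 0\\ h & -\gamma/2 & -g\\ 0 & g & -\gamma\end{pmatrix},\qquad
\det(\mu I-A)=\mu^{3}+2\gamma\,\mu^{2}+\Bigl(\tfrac{5}{4}\gamma^{2}+g^{2}+h^{2}\Bigr)\mu+\gamma\Bigl(\tfrac14\gamma^{2}+\tfrac12 g^{2}+h^{2}\Bigr),
\]
and the Hurwitz condition $a_{2}a_{1}-a_{0}=\tfrac{9}{4}\gamma^{3}+\tfrac{3}{2}\gamma g^{2}+\gamma h^{2}>0$ holds for every $\gamma>0$, which settles~(i) for all real $g,h$. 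Your irreducibility remark at the end is likewise a valid alternative route to the same conclusion.

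Part~(ii), however, is where your plan will actually break --- and through no defect of the plan itself: the diagonalizability assertion is false on the discriminant locus. Take $h=0$ and $\gamma=4g$; the lower $2\times2$ block $\bigl(\begin{smallmatrix}-\gamma/2 & -g\\ g & -\gamma\end{smallmatrix}\bigr)$ then has a double eigenvalue $-3g$, but $\operatorname{rank}\bigl(\begin{smallmatrix}g & -g\\ g & -g\end{smallmatrix}\bigr)=1$, so $A$ --- and hence $\hat{\mathcal{L}}^{(k)}$ --- carries a genuine $2\times2$ Jordan block. Your ``direct rank computation'' at the degenerate parameter values would therefore uncover a counterexample rather than complete the proof. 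The paper escapes this only because its verification is tacitly for the specific numerical parameters used later, where the roots of $\chi_{A}$ happen to be simple; the cited semigroup theorem yields the spectral gap at zero but says nothing about diagonalizability of the generator in general.
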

The matrix $\hat{\mathcal{L}}^{(k)}$ is a $4\times4$ matrix so that this proposition can be directly verified manually or using Matlab or Mathematica. Also, this proposition is a direct consequence of \cite[Theorem 4.4.2]{rivas_open_2012} and the fact that the ${\mathcal{L}}^{(k)}$ generates a relaxing completely positive semigroup.

\begin{prop}
    The evolution operator $U(t,s)$ which corresponds to $\mathcal{L}_{\mathrm{e}}$ satisfies
    \begin{equation}
        \|U(t,s)\left(\rho_1(s)\otimes\rho_2(s)\otimes\cdots\otimes\rho_N(s)\right)\|_{F}\le \|\rho_1(s)\otimes\rho_2(s)\otimes\cdots\otimes\rho_N(s)\|_{F}
    \end{equation}
\end{prop}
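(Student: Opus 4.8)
The plan is to strip off the tensor structure, reduce everything to a single qubit, and then prove a contraction estimate for the one-site evolution. Recall that on a product state $\mathcal{L}_{\mathrm{e}}=\mathcal{L}_1+\mathcal{L}_2+\mathcal{L}_4+\mathcal{L}_5$ acts exactly in the form demanded by \cref{prop: tensor prop}, with the single-site generator
\[
\mathcal{L}^{(k)}(\rho_k)=-\mathrm{i}\Big[\tfrac{g}{2}\sigma_x+\tfrac{h}{2}\sigma_z,\rho_k\Big]+\gamma\,\sigma_-\rho_k\sigma_-^{\dagger}-\tfrac{\gamma}{2}\big\{\sigma_-^{\dagger}\sigma_-,\rho_k\big\},
\]
the same for every $k$ and time-independent, so that $U^{(k)}(t,s)=\mathrm{e}^{(t-s)\mathcal{L}^{(k)}}$. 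By \cref{prop: tensor prop},
\[
U(t,s)\big(\rho_1(s)\otimes\cdots\otimes\rho_N(s)\big)=U^{(1)}(t,s)\rho_1(s)\otimes\cdots\otimes U^{(N)}(t,s)\rho_N(s),
\]
and \cref{prop:kronecker product} turns the Frobenius norm of the right-hand side into $\prod_{k=1}^{N}\big\|U^{(k)}(t,s)\rho_k(s)\big\|_F$. Hence the statement follows once we prove the one-site bound $\big\|\mathrm{e}^{(t-s)\mathcal{L}^{(k)}}M\big\|_F\le\|M\|_F$ for $t\ge s$ and every $2\times 2$ matrix $M$.

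To prove the one-site bound I would vectorize: writing $\operatorname{vec}(\mathrm{e}^{\tau\mathcal{L}^{(k)}}M)=\mathrm{e}^{\tau\hat{\mathcal{L}}^{(k)}}\operatorname{vec}(M)$ with $\hat{\mathcal{L}}^{(k)}$ the explicit $4\times 4$ matrix of \cref{prop: L^k}, and using $\|\cdot\|_F=\|\operatorname{vec}(\cdot)\|_2$, the bound is equivalent to $\big\|\mathrm{e}^{\tau\hat{\mathcal{L}}^{(k)}}\big\|_2\le 1$ for all $\tau\ge 0$. The standard route is the Euclidean logarithmic norm: since $\big\|\mathrm{e}^{\tau A}\big\|_2\le\mathrm{e}^{\tau\,\lambda_{\max}(\frac12(A+A^{\dagger}))}$, it suffices to check that the Hermitian part $\frac12\big(\hat{\mathcal{L}}^{(k)}+(\hat{\mathcal{L}}^{(k)})^{\dagger}\big)$ is negative semidefinite. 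The Hamiltonian piece of $\hat{\mathcal{L}}^{(k)}$ is anti-Hermitian (because $\tfrac{g}{2}\sigma_x+\tfrac{h}{2}\sigma_z$ is real symmetric) and therefore drops out of the Hermitian part, and the $\big\{\sigma_-^{\dagger}\sigma_-,\cdot\big\}$ piece already contributes $-\tfrac{\gamma}{2}\big(\mathrm{Id}\otimes\sigma_-^{\dagger}\sigma_-+(\sigma_-^{\dagger}\sigma_-)^{T}\otimes\mathrm{Id}\big)$, which is negative semidefinite; so everything reduces to weighing the Hermitian part of the jump term $\gamma\,\sigma_-^{*}\otimes\sigma_-$ against it --- a single $4\times 4$ eigenvalue computation, which can be done by hand or symbolically.

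That last computation is the main obstacle, and it deserves care for a subtle reason: \cref{prop: L^k} by itself --- diagonalizability together with spectrum in the closed left half-plane --- only yields that $\{\mathrm{e}^{\tau\hat{\mathcal{L}}^{(k)}}\}_{\tau\ge 0}$ is \emph{uniformly bounded}, not that the bound is exactly $1$ in the Euclidean norm. The contraction constant $1$ is a strictly stronger, basis-dependent statement that must be extracted from the explicit matrix (for instance by the negative-semidefiniteness check above, or by exhibiting a Lyapunov inequality $\langle\hat{\mathcal{L}}^{(k)}x,x\rangle+\langle x,\hat{\mathcal{L}}^{(k)}x\rangle\le 0$). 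Should the Hermitian-part inequality not close as stated, the natural fallback is to restrict $\mathcal{M}_1$ to tensor products of genuine density matrices and use that $\mathrm{e}^{\tau\mathcal{L}^{(k)}}$ is completely positive and trace preserving together with $\|\cdot\|_F\le\|\cdot\|_1$ on $2\times 2$ matrices; this gives a site-wise bound by $\|\rho_k\|_1$ and would then need one further estimate to return to the Frobenius norm.
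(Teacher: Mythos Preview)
Your reduction to a single-site estimate via \cref{prop: tensor prop} and \cref{prop:kronecker product} is exactly what the paper does, and you are right to flag that the paper's final step is illegitimate: the paper infers $\|\hat U^{(k)}(t,s)\|_2\le 1$ from the fact that $\hat{\mathcal L}^{(k)}$ is diagonalizable with spectrum in the closed left half-plane, but for a non-normal matrix the spectral radius can be strictly smaller than the operator $2$-norm, so that inference simply does not hold.

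Unfortunately your proposed repair via the logarithmic norm also fails, for a deeper reason: the inequality you are trying to prove is \emph{false}. Carrying out the $4\times 4$ computation you describe (the Hamiltonian piece drops out exactly as you say), the Hermitian part of $\hat{\mathcal L}^{(k)}$ is
\[
\tfrac{\gamma}{2}\bigl(\sigma_-\!\otimes\!\sigma_-+\sigma_+\!\otimes\!\sigma_+\bigr)-\tfrac{\gamma}{2}\bigl(\mathrm{Id}\!\otimes\!\sigma_-^{\dagger}\sigma_-+(\sigma_-^{\dagger}\sigma_-)^{T}\!\otimes\!\mathrm{Id}\bigr)
=\tfrac{\gamma}{2}\begin{pmatrix}-2&0&0&1\\0&-1&0&0\\0&0&-1&0\\1&0&0&0\end{pmatrix},
\]
whose $(1,4)$--block has eigenvalue $\tfrac{\gamma}{2}(-1+\sqrt2)>0$. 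Concretely, for the diagonal matrix $M=\operatorname{diag}(1,c)$ one has $\mathcal L^{(k)}(M)=\gamma\operatorname{diag}(-1,1)$ (independently of $g,h$), so $\tfrac{\mathrm d}{\mathrm d\tau}\|e^{\tau\mathcal L^{(k)}}M\|_F^2\big|_{\tau=0}=2\gamma(c-1)$, which is positive whenever $c>1$; in particular even the genuine density matrix $\operatorname{diag}(0.4,0.6)$ has its Frobenius norm \emph{increase} under the flow. This reflects the general fact that Lindblad semigroups contract the trace norm but not, in general, the Hilbert--Schmidt norm. Your fallback via complete positivity and $\|\cdot\|_F\le\|\cdot\|_1$ therefore cannot recover the stated bound either. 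What \cref{prop: L^k} genuinely gives is a uniform bound $\|\hat U^{(k)}(t,s)\|_2\le C$ for some $C\ge 1$ depending only on the eigenbasis of $\hat{\mathcal L}^{(k)}$; feeding that through the variance argument produces an extra $C^{2N}$ factor in \cref{eq: variance estimate}, which is harmless for the $\mathcal O(N)$ exponent but should be stated.
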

\begin{proof}
    From \cref{prop: tensor prop} and \cref{prop:kronecker product}, it follows that
    \begin{equation}
        \begin{aligned}
            &\|U(t,s)\left(\rho_1(s)\otimes\rho_2(s)\otimes\cdots\otimes\rho_N(s)\right)\|_{F}\\
            &=\|U^{(1)}(t,s)\rho_1(s)\otimes U^{(2)}(t,s)\rho_2(s)\otimes\cdots\otimes U^{(N)}(t,s)\rho_N(s)\|_{F}\\
            &=\|U^{(1)}(t,s)\rho_1(s)\|_{F}\cdot \|U^{(2)}(t,s)\rho_2(s)\|_{F}\cdots\| U^{(N)}(t,s)\rho_N(s)\|_{F},
        \end{aligned}
    \end{equation}
    and
    \begin{equation}
        \begin{aligned}
            \|\rho_1(s)\otimes\rho_2(s)\otimes\cdots\otimes\rho_N(s)\|_{F}=\|\rho_1(s)\|_{F}\cdot\|\rho_2(s)\|_{F}\cdots\|\rho_N(s)\|_{F}.
        \end{aligned}
    \end{equation}
    Thus, we only need to prove the following inequality:
    \begin{equation}\label{eq:norm less}
        \|U^{(k)}(t,s)\rho_k(s)\|_{F}\le \|\rho_k(s)\|_{F}\iff \|\hat{U}^{(k)}(t,s)|\rho_k(s)\rangle\|_{2}\le \||\rho_k(s)\rangle\|_{2} .
    \end{equation}
From \cref{prop: L^k}, it follows that the real parts of eigenvalues of $\hat{U}^{(k)}(t,s)$ are greater than $0$ and less than $1$, and $\hat{U}^{(k)}(t,s)$ is diagonalizable. Thus, \cref{eq:norm less} holds from the knowledge of linear algebra. This completes the proof.
\end{proof}

According to the above proposition, it follows that,
\begin{equation}\label{eq: inner of straightening rho}
\left\langle\rho_{\Xi}(t) \mid \rho_{\Xi}(t)\right\rangle \leqslant \exp \left(2 \int_0^t \int_{\Omega} \lambda(s, \omega) \mathrm{d} \mu_\omega \mathrm{d} s\right)\left(\prod_{m=1}^M\left\|\widehat{\widetilde{S}}\left(t_m, \omega_m\right)\right\|_2\right)^2\langle\rho(0) \mid \rho(0)\rangle,
\end{equation}
where $\|\cdot\|_2$ is the $2$-norm of matrix which is defined as
\begin{equation}
    \|A\|_2=\max_{x\neq0}\frac{\|Ax\|_2}{\|x\|_2}=\sqrt{\lambda_{\max }\left(A^T A\right)}.
\end{equation}
If there exist constants $\Lambda,\tilde{\alpha}$ satisfying
\begin{equation}\label{eq: upper bound}
|\lambda(t, \omega)| \leqslant \Lambda \quad \text { and } \quad
\|\widehat{\widetilde{S}}(t, \omega)\| \leqslant \widetilde{\alpha} \quad \forall t \in \mathbb{R}^{+}, \quad \forall \omega \in \Omega ,
\end{equation}
then we have
\begin{equation}
    \mathbb{E}_{\Xi}\left\langle\rho_{\Xi}(t) \mid \rho_{\Xi}(t)\right\rangle\le \mathrm{e}^{\Lambda(1+\tilde{\alpha}^2)t}\langle\rho(0)|\rho(0)\rangle.
\end{equation}
Note that \cref{eq:intensity function}, we can choose
\begin{equation}\label{eq: Lambda}
    \Lambda=|\Omega|\max\left\{V/4,\gamma\right\},
\end{equation}
which is a $\mathcal{O}(N)$ number.

We need further analysis for the $\tilde{\alpha}$. First, we have
\begin{equation}
        \widetilde{S}(t, \omega)({\rho})=
        \begin{cases}
        
        -\mathrm{i}\sigma_{z}^{(j)}\sigma_{z}^{(k)}{\rho} & \text { if } \omega=(j, k,1) ,\\
        \mathrm{i}{\rho}\sigma_{z}^{(j)}\sigma_{z}^{(k)} & \text { if } \omega=(j, k,2) ,
        \end{cases}
\end{equation}
therefore, its straightening satisfies
\begin{equation}
    \widehat{\widetilde{S}}(t, \omega)=
    \begin{cases}
        -\mathrm{i}I\otimes\sigma_z^{(j)}\sigma_z^{(k)}  & \text { if } \omega=(j, k,1) ,\\
        \mathrm{i}{\sigma_z}^{(j)}{\sigma_z}^{(k)}\otimes I  & \text { if } \omega=(j, k,2) ,
    \end{cases}
\end{equation}
Then, from \cref{prop: straightening eigenvalue} and the above formula, we can choose 
\begin{equation}\label{eq: tilde_alpha}
    \tilde{\alpha}=\max\left\{\|\sigma_{z}\|_2^2,\|\sigma_{z}^{T}\|_2^2\right\}=1,
\end{equation}
which is a $\mathcal{O}(1)$ number.

By combining \cref{eq: variance,eq: inner of straightening rho,eq: upper bound,eq: Lambda,eq: tilde_alpha}, we can get an estimate of the variance
\begin{equation}\label{eq: variance estimate}
    \sigma^2\le\frac{1}{N_{\text{traj}}}\mathrm{e}^{\mathcal{O}(N)t}\langle\rho(0)|\rho(0)\rangle.
\end{equation}

\begin{rmk}
    Here we highlight the advantages of our approach. When the system is small, it is evident that ordinary ODE solvers, like the Runge-Kuta method, the Exponential method, and so on, are preferred. However, the Runge-Kuta and Exponential methods should store the total density matrix, so they need much storage space when the system is extensive. For example, assume that there are $N$ sites in the system, then the dimension of the Hilbert space is $2^N$. If we use the Runge-Kuta method, we need to store $2^N\times 2^N$ matrix; if the Exponential method is used, we need to store the straightening of a $2^N\times 2^N$ matrix, i.e., we need to store a $2^{2N}\times 2^{2N}$ matrix.

    However, we only need to store some small matrices in the QKMC method. Specifically, when $\rho_{\Xi}$ has the form of simple tensors(i.e. $\rho_{\Xi}\in\mathcal{M}_1$)
    \begin{equation}
        \rho_{\Xi}=\rho_1\otimes\rho_2\otimes\cdots\otimes\rho_N,
    \end{equation}
    then the action of $U(t,s)$ make it still in $\mathcal{M}_1$,
    and the action of $\widetilde{A}(t,\omega)$ also make it in $\mathcal{M}_1$.
    Hence, as long as $\rho_{\Xi}(0)\in\mathcal{M}_1$, then $\rho_{\Xi}(t)\in\mathcal{M}_1$ can be written as $\rho_{\Xi}(t)=\rho_{1,\Xi}(t)\otimes\rho_{2,\Xi}(t)\otimes\cdots\otimes\rho_{N,\Xi}(t)$. 
    
    The observer operator $\widehat{O}$ also belongs to $\mathcal{M}_1$:
    \begin{equation}
        \widehat{O}=\widehat{O}_1\otimes\widehat{O}_2\otimes\cdots\otimes\widehat{O}_N.
    \end{equation}
    Therefore, the expectation of $\widehat{O}$ can be evaluate as follows
    \begin{equation}
        \begin{aligned}
            \langle\widehat{O}\rangle&=\operatorname{Tr}\left(\widehat{O}\rho(t)\right)=\operatorname{Tr}\left(\widehat{O}\mathbb{E}_{\Xi}\rho_{\Xi}(t)\right)=\mathbb{E}_{\Xi}\operatorname{Tr}\left(\widehat{O}\rho_{\Xi}(t)\right)\\
            &=\mathbb{E}_{\Xi}\operatorname{Tr}\left(\widehat{O}_1\otimes\widehat{O}_2\otimes\cdots\otimes\widehat{O}_N\cdot\rho_{1,\Xi}(t)\otimes\rho_{2,\Xi}(t)\otimes\cdots\otimes\rho_{N,\Xi}(t)\right)\\
            &=\mathbb{E}_{\Xi}\left[\operatorname{Tr}\left(\widehat{O}_1\rho_{1,\Xi}(t)\right)\operatorname{Tr}\left(\widehat{O}_2\rho_{2,\Xi}(t)\right)\cdots\operatorname{Tr}\left(\widehat{O}_N\rho_{N,\Xi}(t)\right)\right].
        \end{aligned}
    \end{equation}
    As a result, we only need to store $N$ $2\times 2$ small matrices under every trajectory. In addition, this method is more suitable for parallel computation.

    In a word, the QKMC algorithm can calculate larger systems than conventional ODE solvers. 
\end{rmk}

\begin{rmk}
    From the \cref{eq: variance estimate}, we can see that the variance of the QKMC algorithm grows exponentially with time. This is the famous `` dynamics sign problem " that makes us can only calculate in a short time. Reducing this approach's variance is one of our future works.
\end{rmk}

\section{Numerical Results}\label{sec:numerical results}
\subsection{Examples with few sites}. We first consider some simple cases with only a few sites to verify the QKMC algorithm. The parameters are chosen as
\begin{equation}
    g=4,h=1,\gamma=1,V=0.01.
\end{equation}
Assuming that every site interacts only with its adjacent sites and all the interaction strengths are equal. Initially, we assume that sites have the same state $|\uparrow\rangle$,
\begin{equation}
    \rho(0)=
    \begin{pmatrix}
        1&0\\
        0&0
    \end{pmatrix}
    \otimes
    \begin{pmatrix}
        1&0\\
        0&0
    \end{pmatrix}
    \otimes\cdots\otimes
    \begin{pmatrix}
        1&0\\
        0&0
    \end{pmatrix},
\end{equation}
and we are concerned about the evolution of the population for the "all spin-down" state:
\begin{equation}
    p(t)=\operatorname{Tr}\left(\rho(t)B\right),\quad B=
    \begin{pmatrix}
        0&0\\
        0&1
    \end{pmatrix}
    \otimes
    \begin{pmatrix}
        0&0\\
        0&1
    \end{pmatrix}
    \otimes\cdots\otimes
    \begin{pmatrix}
        0&0\\
        0&1
    \end{pmatrix},
\end{equation}
For small $N$, the reference solutions are calculated by the $4$-order Runge-Kutta method with time step $\mathrm{d}t=0.01$.

In \cref{fig: few sites}, we show the numerical results using QKMC. The cases with two spins to five spins are considered, and $100000$ trajectories are used for all four cases. We can see that QKMC gets the same result as $4$-order Runge-Kutta.

\begin{figure}[htbp]
	\centering  
	\subfigbottomskip=2pt 
	\subfigcapskip=-5pt 
	\subfigure[$N=2$]{
		\includegraphics[width=0.48\linewidth]{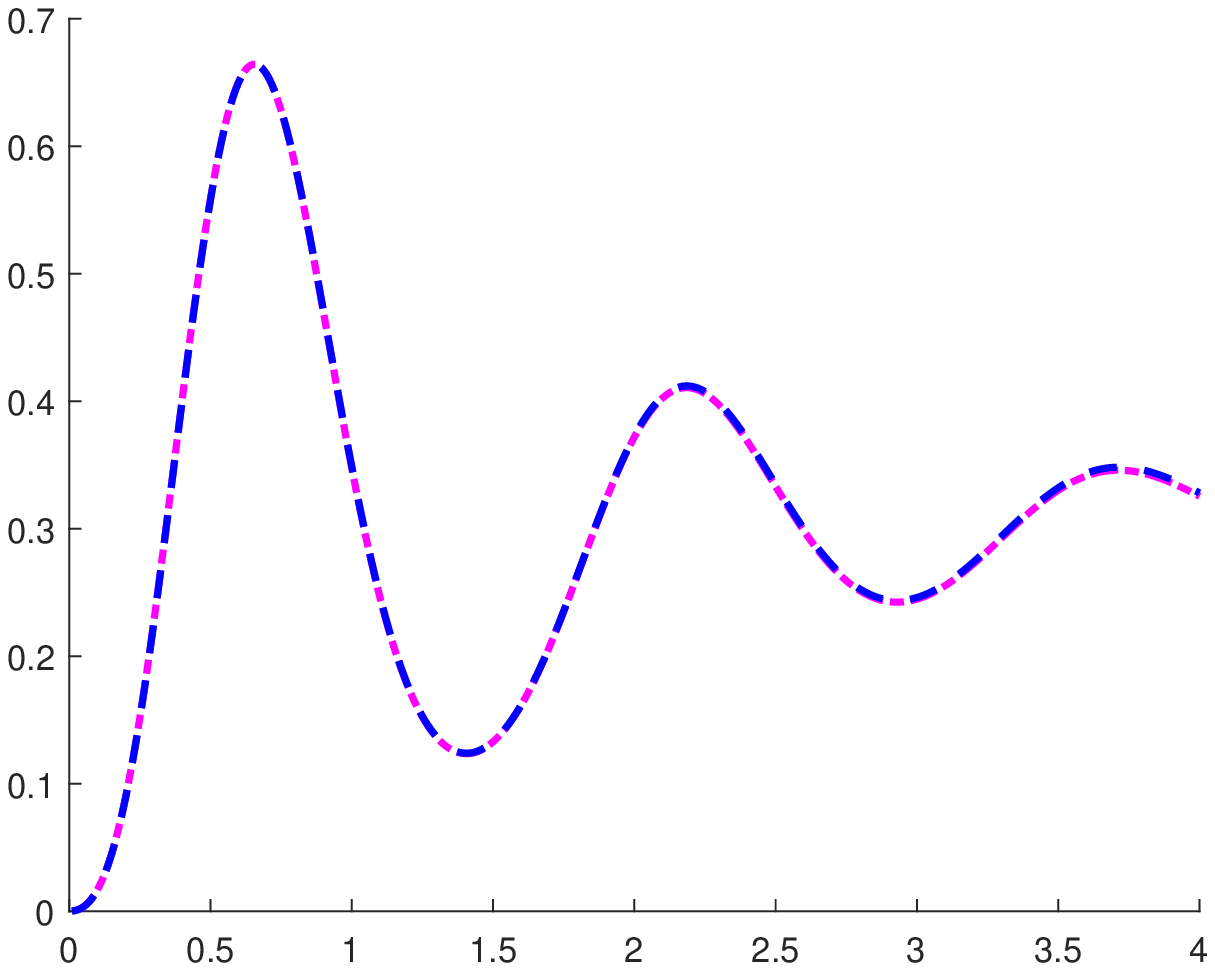}}
	\subfigure[$N=3$]{
		\includegraphics[width=0.48\linewidth]{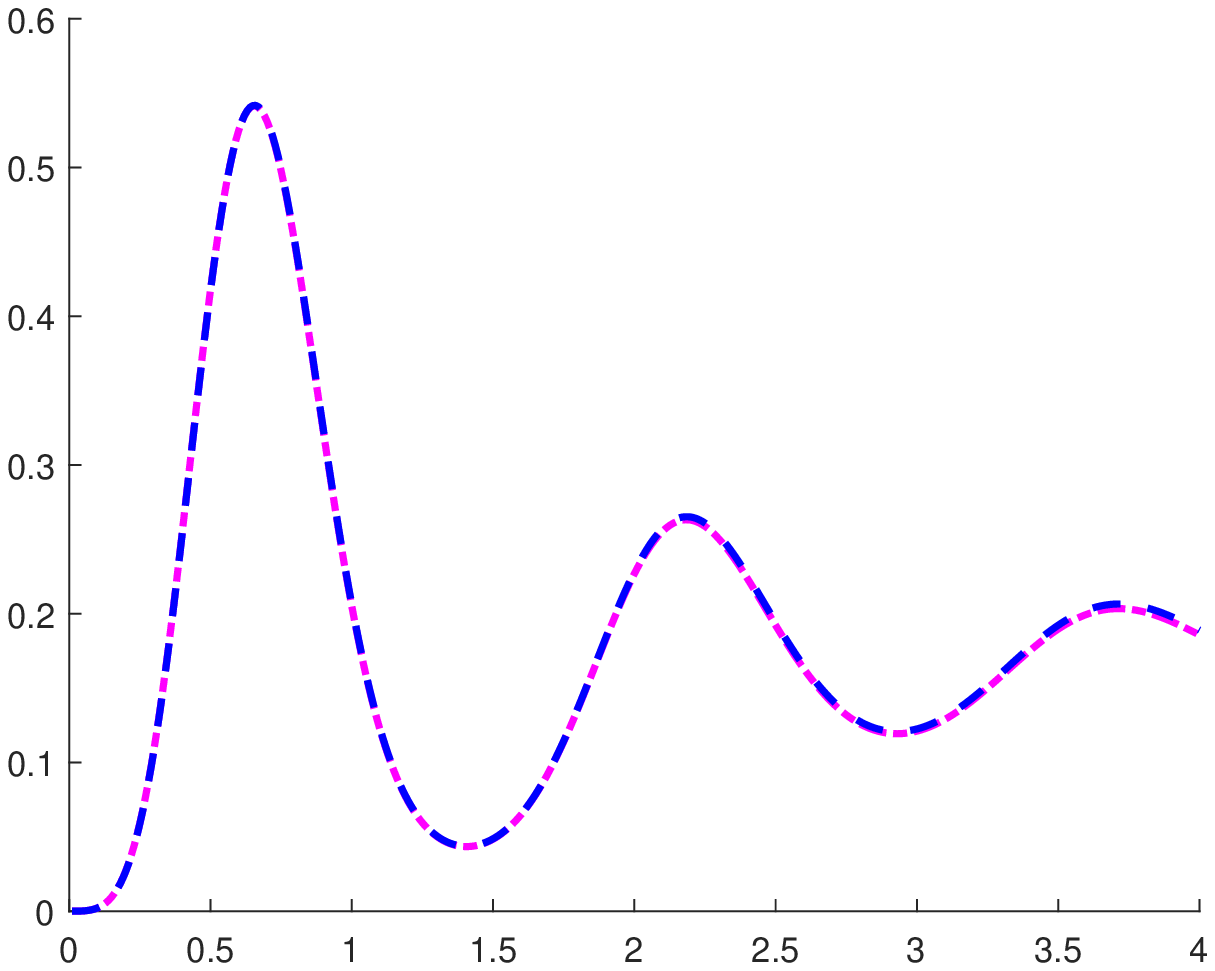}}
	  \\
	\subfigure[$N=4$]{
		\includegraphics[width=0.48\linewidth]{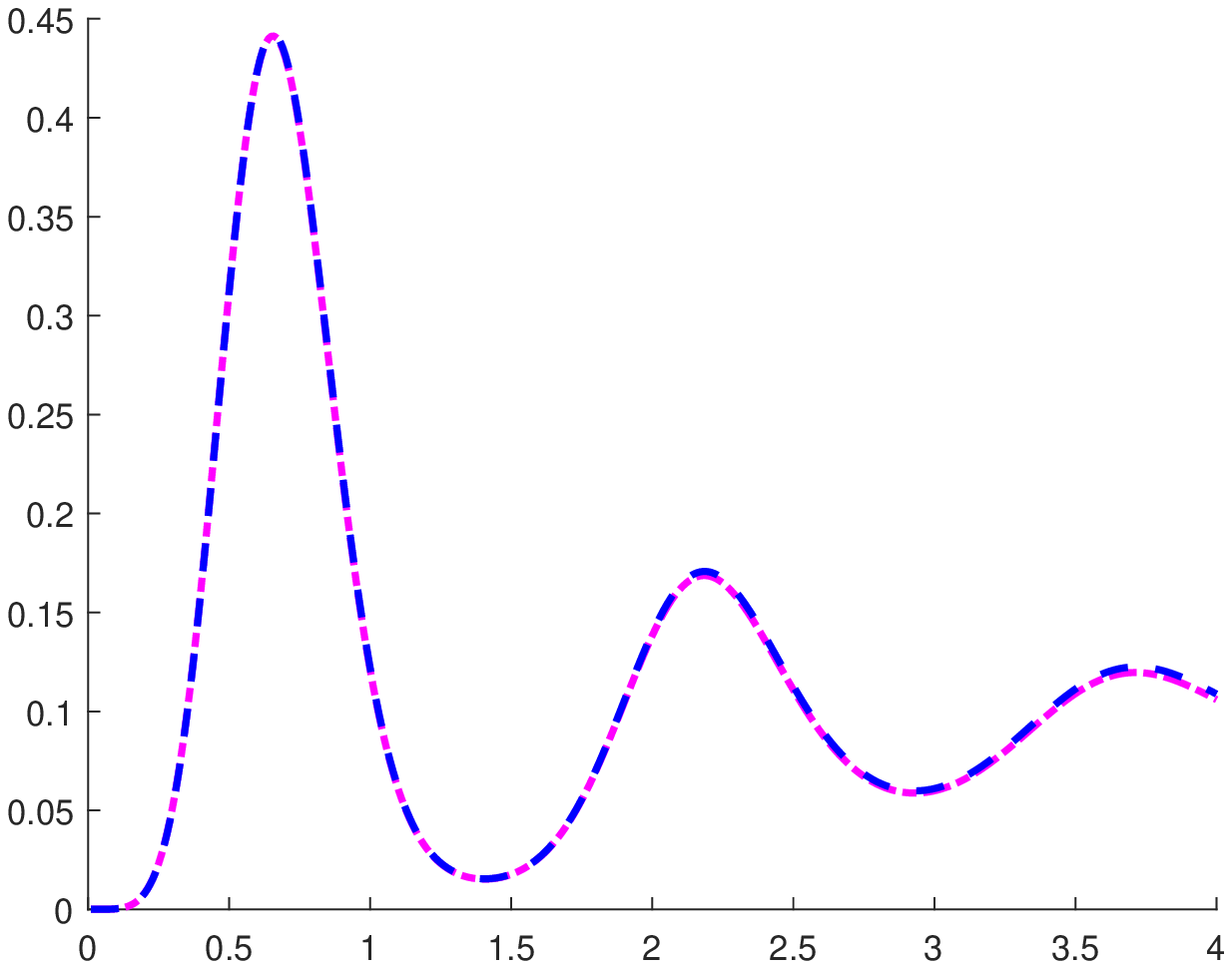}}
	\subfigure[$N=5$]{
		\includegraphics[width=0.48\linewidth]{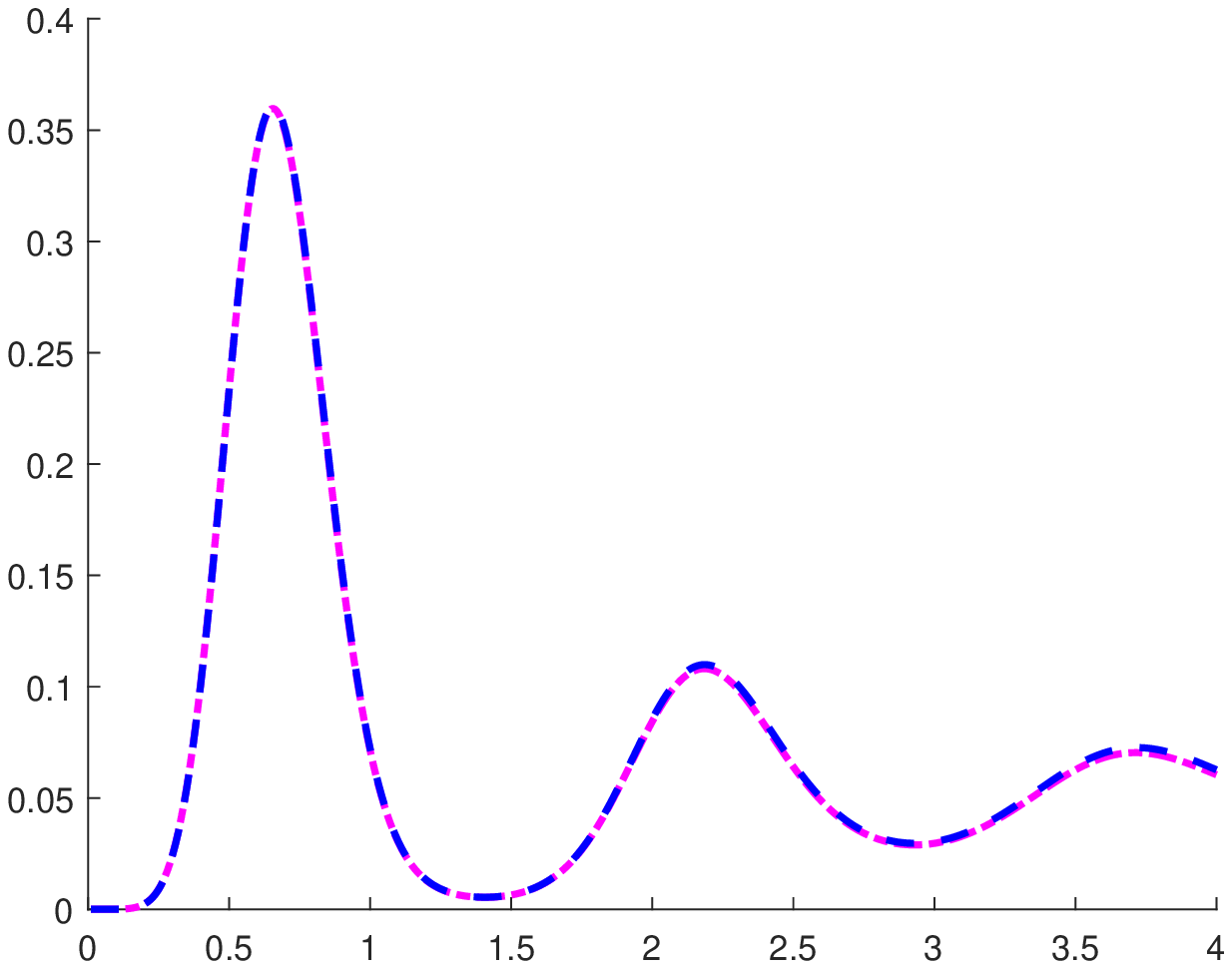}}
	\caption{The population of "all spin-down" state for few sites. The magenta line corresponds to the $4$-order Runge-Kutta method, and the blue line corresponds to the QKMC method.}
    \label{fig: few sites}
\end{figure}

In the four cases, we also check the numerical error for QKMC. Using the result of the $4$-order Runge-Kutta method as the reference solution $|\rho_{\mathrm{ref}}(t)\rangle$, we plot the evolution of the $2$-norm of numerical error $|\rho_{\mathrm{err}}(t)\rangle=|\rho_{\mathrm{num}}(t)\rangle-|\rho_{\mathrm{ref}}(t)\rangle$:
\begin{equation}
    e_{N_{\text{traj}}}(t)=\sqrt{\langle\rho_{\mathrm{err}}(t)|\rho_{\mathrm{err}}(t)\rangle},
\end{equation}
where $|\rho_{\mathrm{num}}(t)\rangle$ is the numerical solution of QKMC with $N_{\text{traj}}$ trajectories. \cref{fig: error} shows the error reduction as the number of trajectories increases, and the order of convergence is calculated in \cref{fig: order}. The numerical order is around $\frac{1}{2}$, which indicates that the expected convergence rate in the Monte Carlo method is achieved in our numerical test.

\begin{figure}[htbp]
    \centering
    \includegraphics[width=0.8\textwidth]{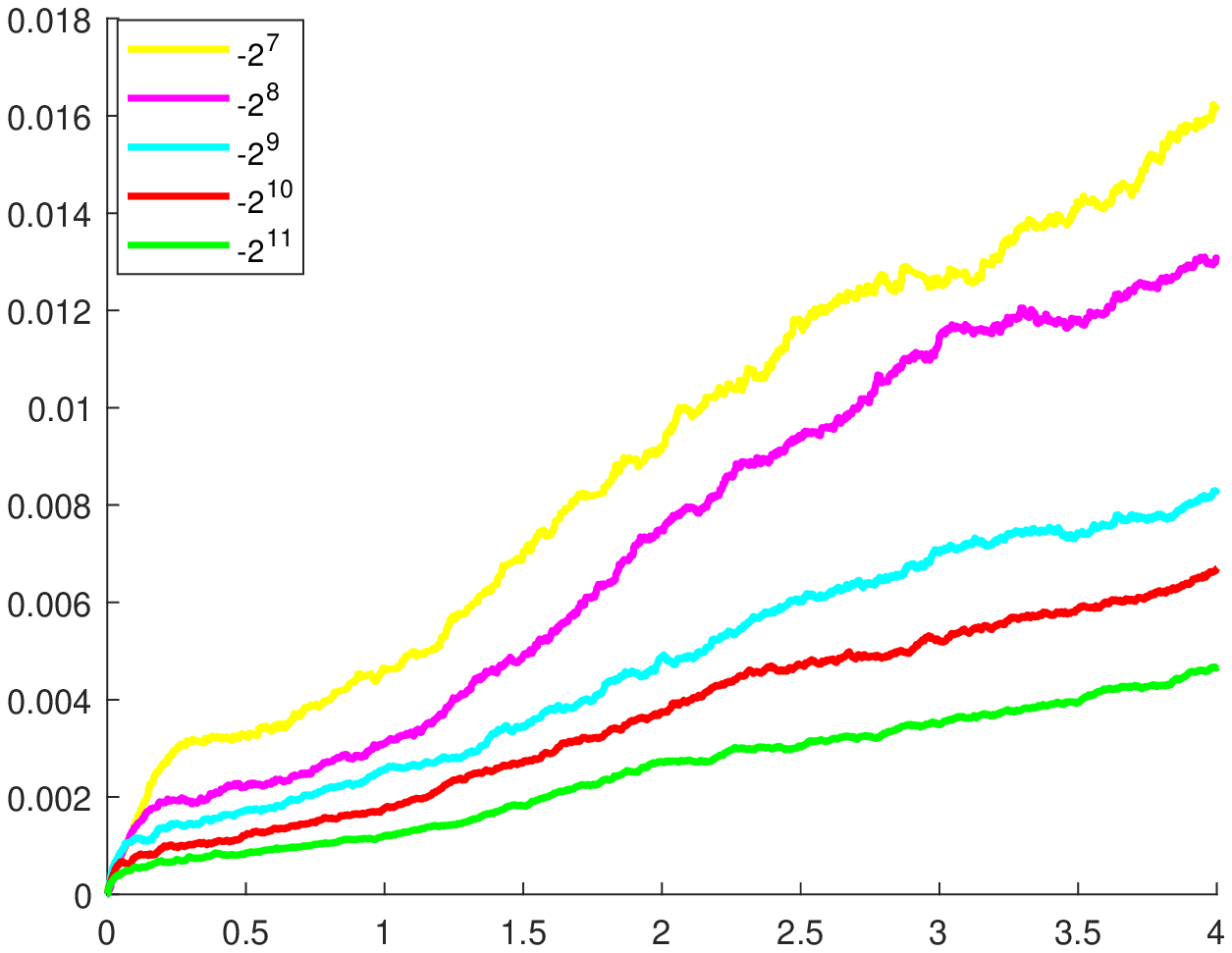}
    \caption{Evolution of the numerical error $e_{\text{Ntraj}}$ for two spins. The $x$-axis is the time $t$, and the $y$-axis is the error $e_{\text{Ntraj}}$.}
    \label{fig: error}
\end{figure}

\begin{figure}[htbp]
    \centering
    \includegraphics[width=0.8\textwidth]{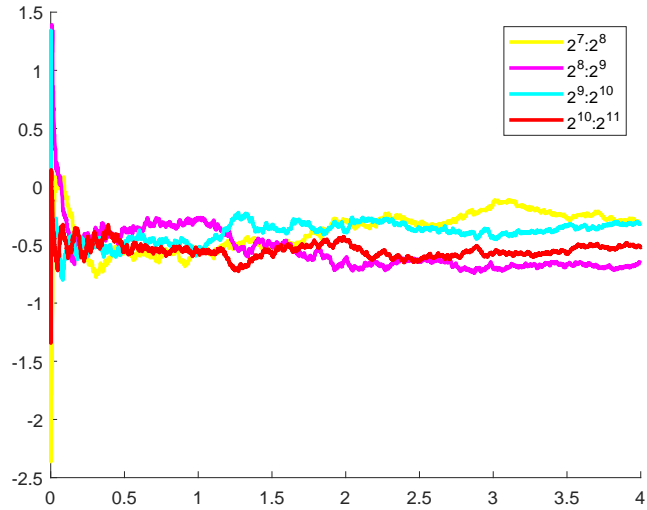}
    \caption{Evolution of the numerical order for two spins. The $x$-axis is the time $t$, and the $y$-axis is the order.}
    \label{fig: order}
\end{figure}

\subsection{Examples with more sites}
The cases with more sites are shown in \cref{fig: more sites}. Here we consider weaker coupling intensity
\begin{equation}
    g=4,h=1,\gamma=1,V=0.002,
\end{equation}
While the number of sites ranges from $10$ to $40$, we use $1000000$ trajectories in QKMC. For $10$ sites, the numerical results again agree with the reference results. For $20$ to $40$ sites, the reference solutions are not provided since the computational time and the storage space for the Runge-Kuta method are not affordable.

\begin{figure}[htbp]
	\centering  
	\subfigbottomskip=2pt 
	\subfigcapskip=-5pt 
	\subfigure[$N=10$]{
		\includegraphics[width=0.48\linewidth]{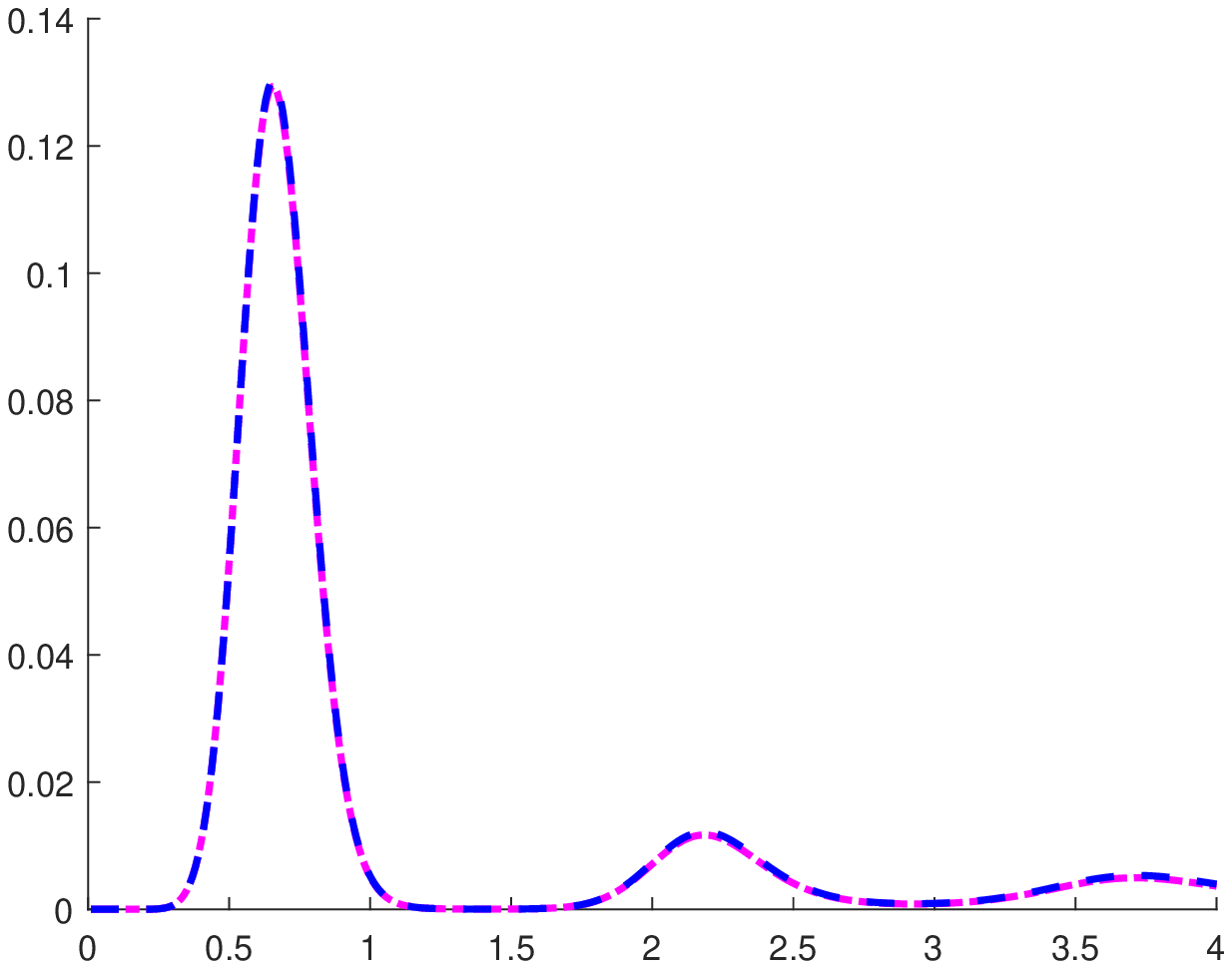}}
	\subfigure[$N=20$]{
		\includegraphics[width=0.48\linewidth]{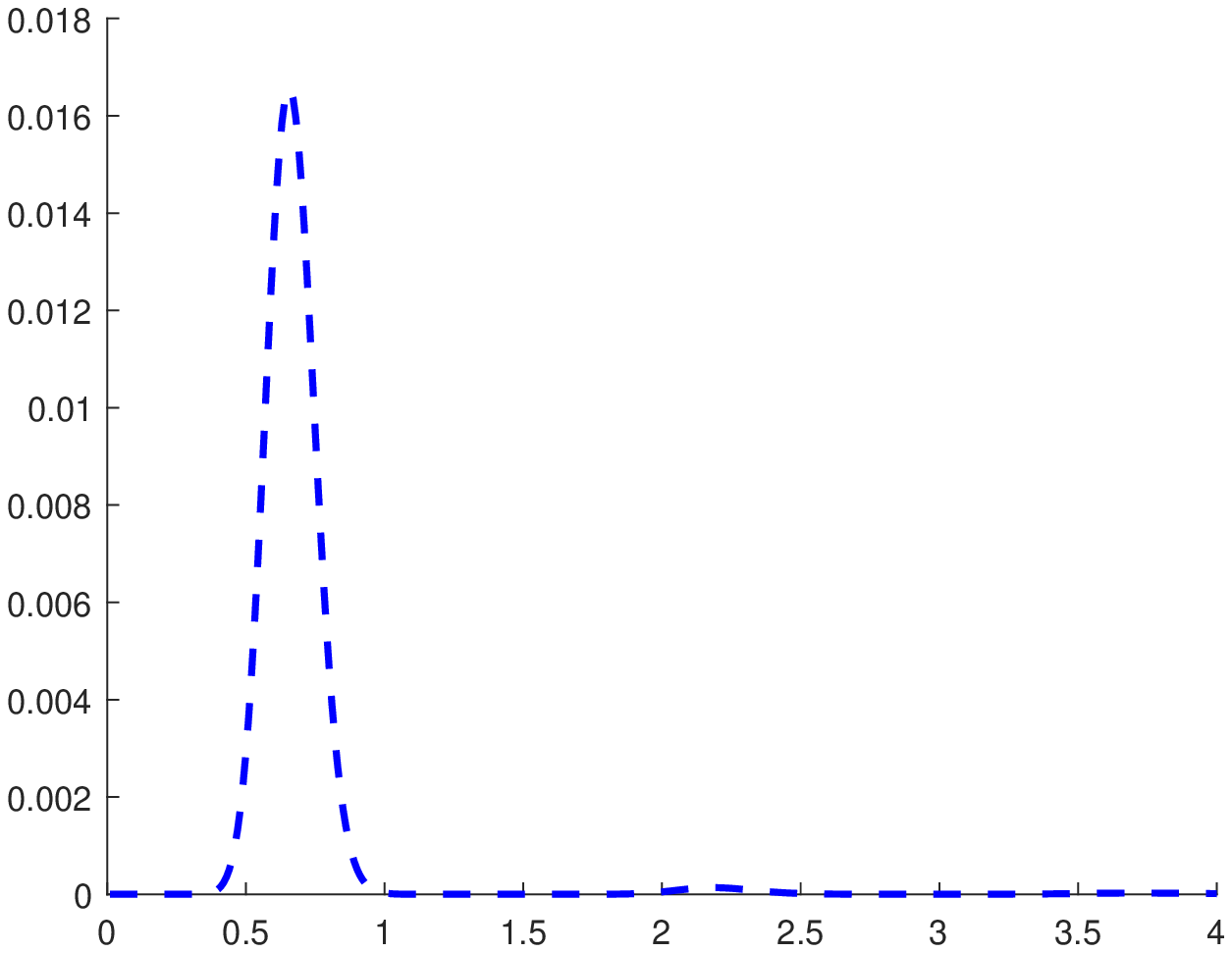}}
	  \\
	\subfigure[$N=30$]{
		\includegraphics[width=0.48\linewidth]{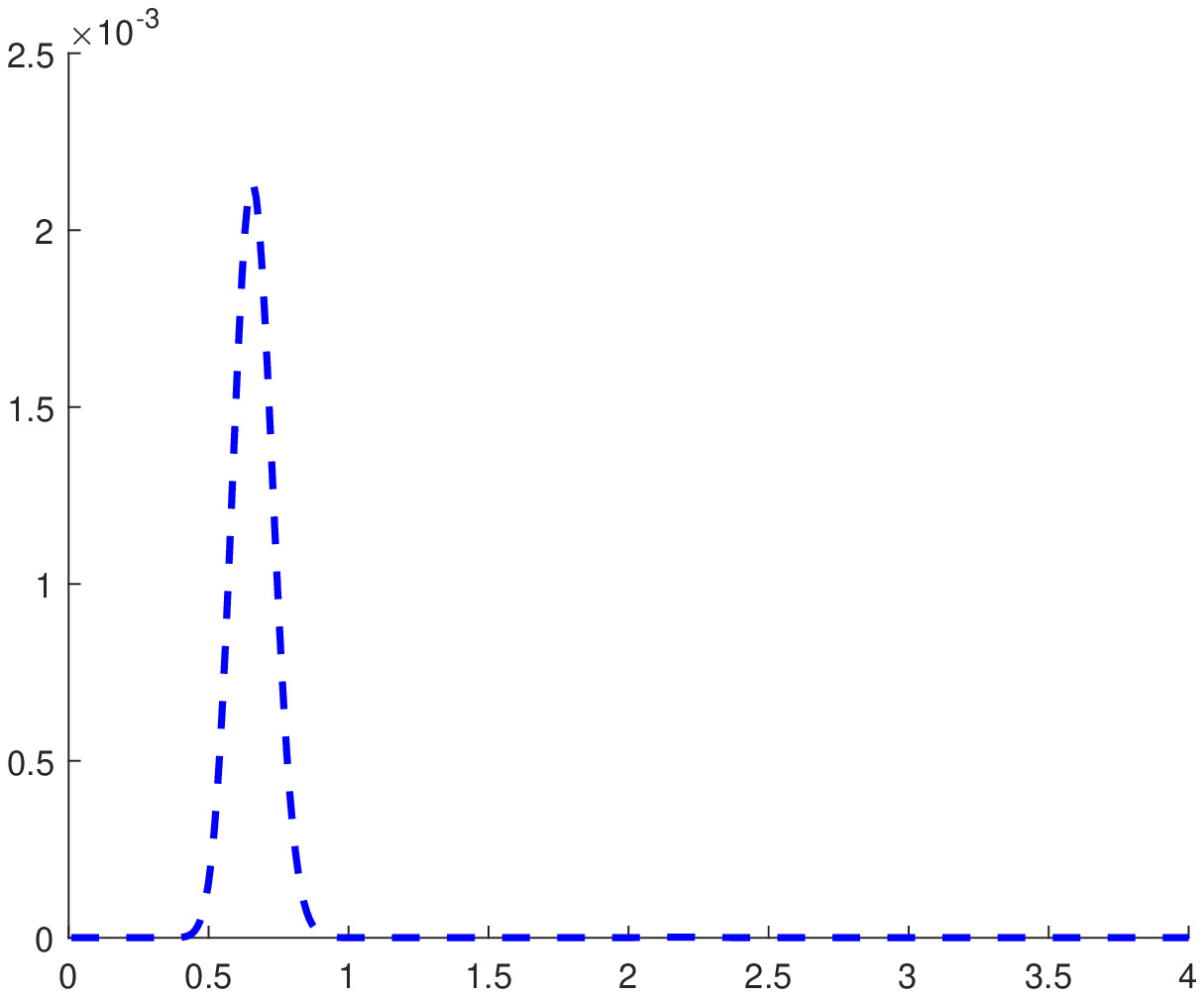}}
	\subfigure[$N=40$]{
		\includegraphics[width=0.48\linewidth]{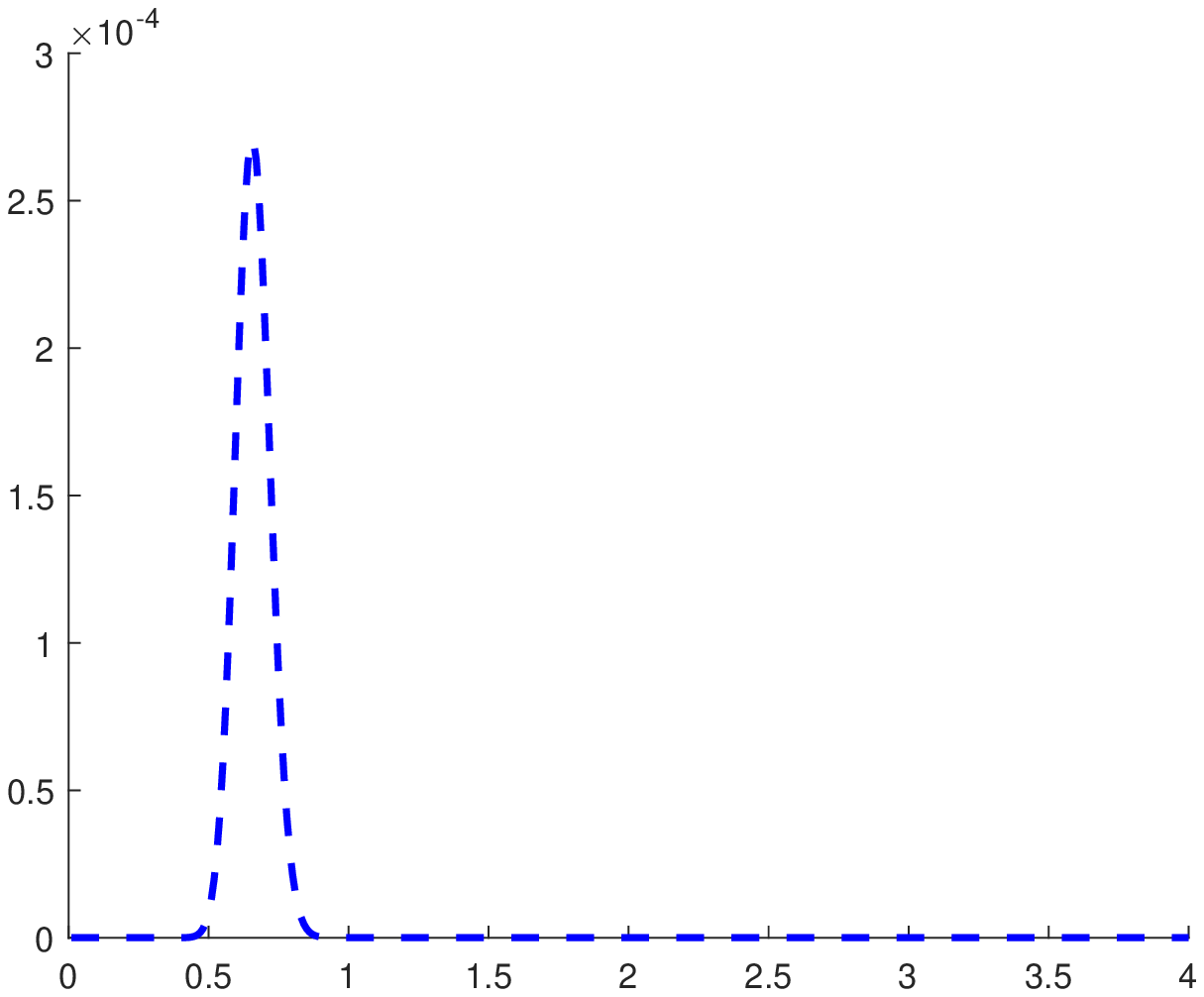}}
	\caption{The population of "all spin-down" state for more sites. The magenta line corresponds to the $4$-order Runge-Kutta method, and the blue line corresponds to the QKMC method.}
    \label{fig: more sites}
\end{figure}

\section{Conclusions}\label{sec:conclusion}
This article proposes a stochastic method to solve the Lindblad equation. This method makes full use of the tensor product structure of the matrices in the Lindblad equation, thus significantly reducing the storage cost. Numerical experiments show that the method can calculate larger systems than the deterministic solver and capture very small populations well. In the future, We will explore how to reduce the variance of this method and how to combine this method with the dynamical low-rank method and stochastic Schr$\mathrm{\ddot o}$dinger equation to reduce storage cost further.

\section{Acknowledgements}
The author thanks Dr. Zhennan Zhou for his valuable advice and fruitful discussions. Also, I am grateful to
Dr. Hao Wu for his support and encouragement.

\bibliographystyle{plain} 
\bibliography{QKMC_open} 

\begin{thebibliography}{10}

\bibitem{noauthor_exploring_nodate}
{\em Exploring the {Quantum} {Atoms}, {Cavities}, and {Photons} by {Serge}
  {Haroche}, {Jean}-{Michel} {Raimond} (z-lib.org) (1).pdf}.

\bibitem{breuer_colloquium_2016}
Heinz-Peter Breuer, Elsi-Mari Laine, Jyrki Piilo, and Bassano Vacchini.
\newblock \textit{{Colloquium}} : {Non}-{Markovian} dynamics in open quantum
  systems.
\newblock {\em Rev. Mod. Phys.}, 88(2):021002, April 2016.

\bibitem{breuer_theory_2002}
Heinz-Peter Breuer and F.~Petruccione.
\newblock {\em The theory of open quantum systems}.
\newblock Oxford University Press, Oxford ; New York, 2002.
\newblock OCLC: ocm49872077.

\bibitem{bris_low_2013}
Claude~Le Bris and Pierre Rouchon.
\newblock Low rank approximation for the numerical simulation of high
  dimensional {Lindblad} and {Riccati} equations.
\newblock {\em Phys. Rev. A}, 87(2):022125, February 2013.
\newblock arXiv:1207.4580 [physics, physics:quant-ph].

\bibitem{cai_quantum_2018}
Zhenning Cai and Jianfeng Lu.
\newblock A quantum kinetic {M}onte {C}arlo method for quantum many-body spin
  dynamics.
\newblock {\em SIAM Journal on Scientific Computing}, 40(3):B706--B722, 2018.

\bibitem{cao_stochastic_2018}
Yu~Cao and Jianfeng Lu.
\newblock Stochastic dynamical low-rank approximation method.
\newblock {\em Journal of Computational Physics}, 372:564--586, November 2018.

\bibitem{cao_structure-preserving_2021}
Yu~Cao and Jianfeng Lu.
\newblock Structure-preserving numerical schemes for {Lindblad} equations,
  March 2021.
\newblock arXiv:2103.01194 [quant-ph].

\bibitem{de_vega_dynamics_2017}
Inés de~Vega and Daniel Alonso.
\newblock Dynamics of non-{Markovian} open quantum systems.
\newblock {\em Rev. Mod. Phys.}, 89(1):015001, January 2017.

\bibitem{golub2013matrix}
Gene~H Golub and Charles~F Van~Loan.
\newblock {\em Matrix computations}.
\newblock JHU press, 2013.

\bibitem{jacobsen_point_2006}
Martin Jacobsen.
\newblock {\em Point process theory and applications: marked point and
  piecewise deterministic processes}.
\newblock Probability and its applications. Birkhäuser, Boston, 2006.

\bibitem{johansson_qutip_2012}
J.R. Johansson, P.D. Nation, and Franco Nori.
\newblock {QuTiP}: {An} open-source {Python} framework for the dynamics of open
  quantum systems.
\newblock {\em Computer Physics Communications}, 183(8):1760--1772, August
  2012.

\bibitem{kramer_quantumopticsjl_2018}
Sebastian Krämer, David Plankensteiner, Laurin Ostermann, and Helmut Ritsch.
\newblock {QuantumOptics}.jl: {A} {Julia} framework for simulating open quantum
  systems.
\newblock {\em Computer Physics Communications}, 227:109--116, June 2018.

\bibitem{le_bris_low-rank_2013}
C.~Le~Bris and P.~Rouchon.
\newblock Low-rank numerical approximations for high-dimensional {Lindblad}
  equations.
\newblock {\em Phys. Rev. A}, 87(2):022125, February 2013.

\bibitem{le_bris_adaptive_2015}
C.~Le~Bris, P.~Rouchon, and J.~Roussel.
\newblock Adaptive low-rank approximation and denoised {Monte} {Carlo} approach
  for high-dimensional {Lindblad} equations.
\newblock {\em Phys. Rev. A}, 92(6):062126, December 2015.

\bibitem{manzano_short_2020}
Daniel Manzano.
\newblock A short introduction to the {Lindblad} master equation.
\newblock {\em AIP Advances}, 10(2):025106, February 2020.

\bibitem{minganti_arnoldi-lindblad_2022}
Fabrizio Minganti and Dolf Huybrechts.
\newblock Arnoldi-{Lindblad} time evolution: {Faster}-than-the-clock algorithm
  for the spectrum of time-independent and {Floquet} open quantum systems.
\newblock {\em Quantum}, 6:649, February 2022.
\newblock arXiv:2109.01648 [quant-ph].

\bibitem{rivas_open_2012}
Angel Rivas and Susana~F. Huelga.
\newblock {\em Open {Quantum} {Systems}}.
\newblock {SpringerBriefs} in {Physics}. Springer Berlin Heidelberg, Berlin,
  Heidelberg, 2012.

\bibitem{rotter_review_2015}
I~Rotter and J~P Bird.
\newblock A review of progress in the physics of open quantum systems: theory
  and experiment.
\newblock {\em Rep. Prog. Phys.}, 78(11):114001, November 2015.

\bibitem{weimer_simulation_2021}
Hendrik Weimer, Augustine Kshetrimayum, and Román Orús.
\newblock Simulation methods for open quantum many-body systems.
\newblock {\em Rev. Mod. Phys.}, 93(1):015008, March 2021.

\end{thebibliography}

\end{document}